\def\psfancypar#1#2{\begingroup\def\par{\endgraf\endgroup\lineskiplimit=0pt}
               \setbox2=\hbox{\large\sc #2}
               \newdimen\tmpht \tmpht \ht2 \advance\tmpht by \baselineskip
               \font\hhuge=Times-Bold at \tmpht
               \setbox1=\hbox{{\hhuge #1}}
               \count7=\tmpht \count8=\ht1
               \divide\count8 by 1000 \divide\count7 by \count8 
               \tmpht=.001\tmpht\multiply\tmpht by \count7 
               \font\hhuge=Times-Bold at \tmpht
               \setbox1=\hbox{{\hhuge #1}}
               \noindent
                \hangindent1.05\wd1
               \hangafter=-2 {\hskip-\hangindent
               \lower1\ht1\hbox{\raise1.0\ht2\copy1}%
                \kern-0\wd1}\copy2\lineskiplimit=-1000pt}
\newcommand{\E}{\mbox{{\rm E}}}
\def\boxit#1{\vbox{\hrule\hbox{\vrule\kern3pt
        \vbox{\kern3pt#1\kern3pt}\kern3pt\vrule}\hrule}}
\def\reals{ { {\rm  I \kern-0.15em R }  } }
\def\complex{ {\,{{\rm C} \kern-0.50em \raise0.20ex {  |}}\, }}
\def\ubf{{\bf u}}
\def\vbf{{\bf v}}
\def\xbf{{\bf x}}
\def\ybf{{\bf y}}
\def\xbf{{\bf x}}
\def\ybf{{\bf y}}
\def\Rbf{{\bf R}}
\def\Ubf{{\bf U}}
\def\Vbf{{\bf V}}
\def\Xbf{{\bf X}}
\def\Ybf{{\bf Y}}
\def\Cc{{\cal C}}
\def\Ec{{\cal E}}
\def\Gc{{\cal G}}
\def\Nc{{\cal N}}
\def\Rc{{\cal R}}
\def\Uc{{\cal U}}
\def\Xc{{\cal X}}
\def\Yc{{\cal Y}}
\def\be{\vskip .3cm \begin{equation}}
\def\ee{\end{equation} \vskip .4cm \noindent}
\def\defeq{{\stackrel{\Delta}{=}}}
\newcommand{\R}{\mbox{$\hat {\bf R}_{N}$}}
\def\Rxx{\Rbf_{\ssstyle X\kern-.1em X}}
\let\ssstyle=\scriptscriptstyle
\def\Kout{\setbox1=\hbox{\Huge\bf K}\hbox to
1.05\wd1{\hspace{.05\wd1}
\def\Sout{\setbox1=\hbox{\Huge\bf S}\hbox to 1.05\wd1{\hspace{.05\wd1}

  \ifx\LabelFigloaded\MYundefined\relax
  \else
    \message{ !!! labelfig.tex ALREADY loaded !!!}
   \fi

  \def\LabelFigloaded{\relax}


  \chardef\LabelFigCatAt\the\catcode`\@
  \catcode`\@=11

 \let\LabelFigwlog@ld\wlog
 \def\wlog#1{\relax}

 \ifx\\\MYundefined@
    \let\\\relax
 \fi


  \def\ms@g{\immediate\write16}

 \def\N@wif{\csname newif\endcsname }
 \def\Temp@ {\N@wif\ifIN@}
 \ifx\INN@\MYundefined@
    \else \let\Temp@\relax
 \fi
 \Temp@

  \def\IN@{\expandafter\INN@\expandafter}
  \long\def\INN@0#1@#2@{\long\def\NI@##1#1##2##3\ENDNI@
    {\ifx\m@rker##2\IN@false\else\IN@true\fi}%
     \expandafter\NI@#2@@#1\m@rker\ENDNI@}
  \def\m@rker{\m@@rker}
 
  \newtoks\Initialtoks@  \newtoks\Terminaltoks@
  \def\SPLIT@{\expandafter\SPLITT@\expandafter}
  \def\SPLITT@0#1@#2@{\def\TTILPS@##1#1##2@{%
     \Initialtoks@{##1}\Terminaltoks@{##2}}\expandafter\TTILPS@#2@}

 \def\Shifted@@#1#2#3{\setbox0=\hbox{#3}%
   \raise -\dp0\vbox {\kern-#2%
       \hbox {\kern#1\unhbox0\kern-#1}%
           \kern#2}}

 \newcount\gridcount
 \newbox\auxGridbox@ \newbox\hGridbox@ \newbox\vGridbox@
 \newbox\Labelbox@ \newbox\auxLabelbox@
 \newbox\Coordinatebox@
 \newtoks\Labeltoks@
 \newdimen\Wdd@ \newdimen\Htt@
 \newdimen\Wddd@ \newdimen\Httt@
 
 \def\Wr@{\immediate\write16}

 \newdimen\GL@wd
 \GL@wd=.02pt
 \def\GridLineWidth#1{\GL@wd=#1}

 \def\gobble#1{}
 \def\EdgeErr@{\Wr@{}%
      \Wr@{\string\Edges\space argument
      1, 10, 100 or 1000 please\string!}%
      }

 \newcount\Edgect@

 \def\Sweepup#1\endSweepup{}

 \def\SetEdges@{%
    \edef\Zr@@s{\expandafter\gobble\number\Edgect@\empty}%
        \count255=0\Zr@@s\relax
        \ifnum\count255=\z@\else\EdgeErr@\show\tailtest\fi
        \count255=1\Zr@@s\relax
        \ifnum\count255=\Edgect@\relax\else\EdgeErr@\show\leadtest\fi
    \EdgGl@b\edef\Zr@s{\expandafter\gobble\Zr@@s\empty}
    \ifnum\Edgect@>\@ne\relax\EdgGl@b\let\L@Dc\empty
        \else\EdgGl@b\edef\L@Dc{\string.}\fi
    \ifnum\Edgect@>\@ne\relax
        \EdgGl@b\edef\Edgescale@##1{\divide##1 by \Edgect@}%
        \else\EdgGl@b\edef\Edgescale@##1{}\fi
    }

 \def\Edges#1{\Edgect@=#1\relax
     \let\EdgGl@b\global \SetEdges@}

 \Edges{1}

 \def\hhrule{\hrule height \GL@wd\vskip-.\GL@wd}

 \def\hRule@{%
   \advance\gridcount -2%
   \vfil\hhrule\vfil
   \llap{\smash{\raise -2.5pt
     \hbox{\L@Dc\number\gridcount\Zr@s\kern2pt}}}%
   \hhrule
   }

\def\vvrule{\vrule width \GL@wd \kern-\GL@wd}

 \def\vRule@{\advance\gridcount 2%
   \hfil\vvrule\hfil
   \setbox\auxGridbox@=\vbox to 0pt
      {\vskip \Htt@\vskip 2pt
        \hbox to 0pt{\hss\L@Dc\number\gridcount\Zr@s\hss}\vss}%
      \wd\auxGridbox@=0pt \box\auxGridbox@
   \vvrule
   }

 \def\PlaceGrid@@{\gridcount=10 
  \setbox\hGridbox@=\hbox{%
        \hbox{%
             \hskip-.4pt\vrule
             \vbox to \Htt@{%
               \offinterlineskip\parindent=\z@\relax
               \hbox to \Wdd@{\hfil}
               \hRule@\hRule@\hRule@\hRule@
               \vfil\hhrule\vfil}%
             \vrule\hskip-.4pt}
    }%
  \gridcount=0%
  \setbox\vGridbox@=\hbox{%
      \vbox{\offinterlineskip\parindent=0pt\hsize=0pt
         \vskip-.4pt\hrule%
         \hbox to \Wdd@{%
                 \vtop to \Htt@{\vfil}%
                 \vRule@\vRule@\vRule@\vRule@
                 \hfil\vvrule\hfil}%
         \hrule\vskip-.4pt}}%
  \wd\hGridbox@=0pt\ht\hGridbox@=0pt
  \wd\vGridbox@=0pt\ht\vGridbox@=0pt
  \hbox{\box\hGridbox@\box\vGridbox@}%
  }

 \def\LabelsGlobal{\def\LabGl@b{\global}}
 \def\LabelsLocal{\def\LabGl@b{}}
 \LabelsGlobal 

 \def\SetLabels#1\endSetLabels{%
   \LabGl@b\Labeltoks@={#1()\\}%
   }

 \LabGl@b\Labeltoks@={()\\}

 \def\ShowGrid{\LabGl@b\let\PlaceGrid@\PlaceGrid@@}
 \def\HideGrid{\LabGl@b\let\PlaceGrid@\relax}
 \def\Grids{\ShowGrid\LabGl@b\let\GridSwitch@\ShowGrid}
 \def\noGrids{\HideGrid\LabGl@b\let\GridSwitch@\HideGrid}

 \noGrids

 \def\bAdjust@@{%
     \setbox\auxLabelbox@=\hbox{\raise \dp\auxLabelbox@
            \box\auxLabelbox@}}
 \def\bAdjust@{\let\vAdjust@\bAdjust@@}

 \def\eAdjust@@{\dimen0=-.5\ht\auxLabelbox@
     \advance\dimen0 by .5\dp\auxLabelbox@
     \setbox\auxLabelbox@=
            \hbox{\raise\dimen0\box\auxLabelbox@}}
 \def\eAdjust@{\let\vAdjust@\eAdjust@@}

 \def\tAdjust@@{%
     \setbox\auxLabelbox@=\hbox{\raise-\ht\auxLabelbox@
            \box\auxLabelbox@}}
 \def\tAdjust@{\let\vAdjust@\tAdjust@@}

 \let\vAdjust@\relax

 \def\lAdjust@{\let\hAdjust@\rlap}
 \def\rAdjust@{\let\hAdjust@\llap}

 \let\hAdjust@\relax\let\vAdjust@\relax

 \def\FetchLabel@#1(#2)#3\\{%
     \IN@0#2@@\ifIN@
        \setbox0=\hbox{\ignorespaces#1#3\unskip}%
        \ifdim\wd0>0pt
           \ms@g{}%
           \ms@g{ !!! Bad label(s)? !!!}%
           \message{ #1(#2)#3}%
        \fi
        \def\LabelMole@##1\endFetchLabel@{%
            \IN@0()\\@##1@%
            \ifIN@\def\Temp@{\FetchLabel@##1\endFetchLabel@}%
            \else\def\Temp@{}%
            \fi
            \Temp@
           }%
     \else
       \ignorespaces#1\unskip
       \setbox\auxLabelbox@=%
         \hbox to 0pt{\hss\ignorespaces\hAdjust@
          {\ignorespaces#3\unskip}\hss}%
       \vAdjust@
       \let\hAdjust@\relax\let\vAdjust@\relax
       \AugmentLabelBox@@{#2}%
       \ht\Labelbox@=0pt\dp\Labelbox@=0pt
       \let\LabelMole@\FetchLabel@%
     \fi\LabelMole@}

 \newtoks\XYSep@ 
 \def\SetXYSeparator#1{%
     \IN@0#1@@\ifIN@\XYSep@{*}%
     \else
     \XYSep@{#1}%
     \fi
     }

 \SetXYSeparator*

 \def\AugmentLabelBox@@#1{%
     \IN@0\the\XYSep@ @#1@\ifIN@
       \SPLIT@0\the\XYSep@ @#1@%
       \setbox\Labelbox@=\hbox to 0pt{%
         \unhbox\Labelbox@
         \Shifted@@{\the\Initialtoks@\Wddd@}%
         {\the\Terminaltoks@\Httt@}%
         {\box\auxLabelbox@}}%
     \else
         \ms@g{}%
         \ms@g{ !!! Bad insertion point. !!!}%
         \message{ (#1\ this point was rejected.)}%
     \fi
    }

 \def\FetchOption@#1[#2]#3\endFetchOption@{%
    \def\temp{#1}
    \ifx\temp\empty
       \Edgect@=#2\relax
       \let\EdgGl@b\relax
       \SetEdges@
       \Cleaner@#3%
    \fi}

 \def\Cleaner@#1[@]{\Labeltoks@{#1}}
     
 \def\PlaceLabels@@{\mathsurround=0pt
     \def\Cr@{\\}%
     \let\L\lAdjust@\let\R\rAdjust@
     \let\B\bAdjust@\let\E\eAdjust@\let\T\tAdjust@
     \expandafter\FetchOption@\the\Labeltoks@[@]\endFetchOption@
     \Wddd@=\Wdd@ \Edgescale@\Wddd@ 
     \Httt@=\Htt@ \Edgescale@\Httt@
     \expandafter\FetchLabel@\the\Labeltoks@\endFetchLabel@
     \box\Labelbox@
     }%

 \let \PlaceLabels@\PlaceLabels@@

 \def\AffixLabels#1{\setbox\Coordinatebox@=\hbox{#1}%
      \Wdd@=\wd\Coordinatebox@ \Htt@=\ht\Coordinatebox@
      \advance\Htt@ \dp\Coordinatebox@
      \hbox{\copy\Coordinatebox@\kern-\Wdd@ 
           \Shifted@@{0pt}{-\dp\Coordinatebox@}%
           {\PlaceLabels@\PlaceGrid@}%
           \kern\Wdd@}%
      \GridSwitch@ 
      \LabGl@b\Labeltoks@{()\\}%
      }
 
   \let\wlog\LabelFigwlog@ld   
   \catcode`\@=\LabelFigCatAt  


 
                                By

              Raymond S\'eroul <A18645@FRCCSC21.BITNET>
                                and 
              Laurent Siebenmann <lcs@topo.math.u-psud.fr>
    
              VERSIONS: July 1991, Oct 1991, Jan 1992, July 1992

INTRODUCTION

      This labelling package is intended for TeX users who
rely on non-TeX sources for for their graphics inserts.  It
provides means for adding TeX labels to such inserts with a
minimum of fuss. 

       For most labels, TeX users have in the past found it
reasonably convenient to rely on non-TeX sources. Typical
occasions when an inescapable need for TeX labels seemed to
arise are

 (a) when the graphics program lacks certain exotic or complex
mathematical symbols

 (b) when the very highest typographical quality is wanted for the
labels

 (c) when labels included with the graphics fail to print, 
 and you cannot figure out why (cf. boxedeps.doc).  The labels
 provided by labelfig.tex are 100

       Since this package first appeared, many users, who in the
past scarcely dreamed of using TeX labels, have come to use
nothing but.  So it is now appropriate to add

Intoxication Warning:  TeX labels may be addictive and expensive. 

     If you have a fast preview you may disagree, and even find
that this package provides an agreeable paste-up environment; see
extra applications at end.

     Note to publishers: It is possible and convenient to ultimately
export the TeX labels produced by labelfig.tex to become an integral
part of the EPS file. This is often desired by a publisher who typically
uses an "upmarket" graphics or page layout program, with which the
staff is skilled in perfecting figures.  See Appendix I for
a recipe.

     The authors are grateful to Patrick Ion of Math Reviews for
helpful comments and encouragement.

BASIC INSTRUCTIONS

    After reading in the macro file using

preview or proof your figure with a coordinate grid printed on
top, by typing the following:

    \ShowGrid  
    \AffixLabels{<the graphics insertion>}

Here <the graphics insertion> is what you would type to insert
the graphics object alone without the grid.  This must provide
for the space around it. For example <the graphics insertion>
might well be \BoxedEPSF{MyFigure scaled 700} using the
boxedeps.tex macro package (from same source); this provides a
TeX box containing the encapsulated PostScript insert specified by
the file MyFigure. \AffixLabels{...} provides the grid (supposing
\ShowGrid is present) and later, once you have specified labels
using the grid, it will "tack on" the labels.

     The grid is a sort of (usually elongated) checkerboard of
ten rows and ten columns and its (internal) partitions are by
default numbered  .1, ... ,.9  both horizontally (X-coordinate
running left to right) and vertically (Y-coordinate running bottom
to top).  Thus the points enclosed by the grid correspond to the
points of the unit square in the cartesian "X-Y" plane, the lower
left corner corresponding to the origin (0,0).  By extrapolation,
the full page corresponds to a larger rectangle in the plane.

     These coordinates serve to position labels as follows.
Before the \AffixLabels{...} command type label specifications:

  \SetLabels
   (<X-coordinate>*<Y-coordinate>) <first label> \\
   .
   .
   .
   (<X-coordinate>*<Y-coordinate>)  <last label> \\
  \endSetLabels

Each row specifies one label and is terminated by \\.  In each
row, the position indicator comes first; it is written as a
standard cartesian point except that the X- and Y- coordinates
are separated by * rather than a comma because TeX allows a
comma as decimal point. There are no dimension units to specify
as the unit is the grid itself.

     By default, this cartesian point specifies where the middle
of the baseline of the label will be located.  However if you precede
the point by \L [or \R] the left [or right] edge of the baseline will
be located there. Similarly you may also precede the point by \T, \E,
or \B to vertically align the top equator or bottom of the label box
at the specified point.  This gives nine standard positions of
the label with respect to the insertion point --- corresponding to
the eight principle points of the compas and the center

                     \L\T     \T      \R\T

                     \L\E     \E      \R\E

                     \L\B     \B      \R\B

But this neglects the default "baseline" level of TeX,
giving potentially three more positions

                     \L    <no tag>   \R

For text, the baseline level is often the preferred. Its relation to
the others is variable. It will often coincide with the bottom level,
as happens for "X".  But it is often distinct, as for "g", in which
case you have in all 12 distinct positions rather than 9.

     It is convenient to think of this specification of label
position as attaching the label by a thumb-tack to the coordinate
grid. There are up to twelve positions of the thumb-tack on the
label, while the position of the thumb-tack on the coordinate grid is
arbitrary.  Normally, one choses the position of the thumb-tack on
the label to be the one that is the closest to the item being
labeled.  There are good reasons for this "rule of thumb":

   (a)  It facilitates correct positioning at first try.

   (b)  If the scale of the figure must be altered after labels
have been affixed, the labels have a good chance of remaining well
positioned.

   (c)  The visible grid need not extend beyond the "bounding box"
for the figure, because the best preferred position is always
(at least almost) within the bounding box .

The second reason is particularly important. Indeed it often
happens that scale has to be altered after labelling begins, in
order to either provide space for the labels, or to adjust
proportions between the labels and the figure.  (The size of labels
is unaffected by scaling.)

     Here is an artificial but self-contained test which uses
TeX rules to make a graphics object.

TEST

    Do not skip this!



 \def\FrameIt#1{\hbox{\vrule$\vcenter {\hrule\kern3pt%
             \hbox {\kern3pt #1\kern3pt}%
               \kern3pt\hrule}$\relax\vrule}}

 \def\Caption#1#2{\FrameIt{%
       \vtop {\hsize=#1\relax \parindent=0pt
         \leftskip=0pt \rightskip=0pt plus15pt
         \parfillskip=0pt
         \lineskip=1pt\baselineskip=0pt
         #2}}}

 \def\FirstQuadrant{\hbox to 100pt{\vrule\vbox to 100pt{%
        \hbox to 100pt{\hfil}\vfil\hrule}\hss}}


  \SetLabels
    \R(.5*.2) $\zeta\,\cdot$\\
    (.9*-.10) $\xi$\\
    \R(-.03*.9) $\eta$\\
    \T(.5*.9) \Caption{70pt}{%
          \it The norm of
          $g(\xi+i\eta)$ is indicated on
          contours of this invisible surface.}\\
  \endSetLabels

  \AffixLabels{\FirstQuadrant}

  \end

  Note that the coordinates to use for labels are indicated on the
edges of the grid (when visible) corresponding to the conventional
x- and y- axes of the Cartesian plane. By default the grid is
1-by-1. However, by the command \Edges{100}, you can change this
to 100-by-100 and many users find this alternative most
convenient. Place the command \Edges{...} in your style file (or
header) since its effect is is global. Other possible edge values
are 10 and 1000.

  If you use the command \Edges{...} at all, do so with care.  For
if you accidentally delete an \Edges{...} command your labels will
abruptly be badly misplaced and may logically but mysteriously
generate "dimension too big" errors under TeX and "off page" errors
under your driver.  

  You can dictate the edgescale for an individual figure by giving
the scale in brackets immediately after \SetLabels.  Thus, to
import into an article using say \Edge{100} a figure labelled using
another edgescale, say the original 1-by-1 default, you can use
\SetLabels[1]...\endSetLabels.


GETTING IT DOWN PAT

     Complicated labeling deserves the same respect as
complicated mathematics.  Do not expect it to come out perfect the
first time!  What is needed in either case is a mechanism to
repeatedly typeset troublesome pieces.

     One mechanism is always available.  One does complicated
labelling in a separate "test" file involving just the figure being
labelled;  a texpert will know how to \dump TeX's current state as
a temporary format that restarts rapidly at each retry.  Usually,
one then pastes the completed labelled figure back into the main
TeX file, but, of course, one can also \input it as an auxiliary
file.

     If you do not have a TeXpert at handy, here is a first
approximation to an efficient setup. By deletions reduce a copy
of your article to just a few lines before and after the figure.
Now label the figure, and finally, copy and paste the labelled
figure to the original article. Then copy the next figure to label
into this testbed and repeat. The TeXpert can improve the  speed
at which TeX starts up, by compiling a format specifically for
your article; just one caution: best NOT include in the format
ephemeral details of setup like \Set<mydriver>ArtSpecials (from
boxedeps.tex because this reads  figure dimensions which you may
change during your work session.

     An improved mechanism to repeatedly typeset troublesome
pieces is now available on the Macintosh; it is called LinoTeX;
see the same ftp sources.  It could be set up on many types
of computer.

     Before using labelfig.tex to attach labels to a graphics
object inserted using boxedeps.tex or BoxedArt.tex, make it a
firm rule to carefully adjust the bounding box using the trimming
commands of these packages, and also at least tentatively scale
and position the object. Beware of changing the grid inadvertently
after the labels have been positioned.  For example, correcting
the bounding box of a PostScript graphics object can foul up the
labels by changing the coordinate grid to which the labels are
attached. This is particularly true for the trimming  commands of
boxedeps.tex and BoxedArt.tex. However, as noted already, change
of scale is much less disruptive, and modest adjustments should be
well tolerated.

     Sometimes the labels protrude so far from the bounding box
of a figure that the figure has to be repositioned.  Best do this
by ad hoc spacing, say using \hglue and \vglue; altering the
bounding box would create a vicious circle.

     Remember that you are responsible for preventing labels
from overlapping. You are responsible for all label typography
including size and style. A label is really just about anything
that can be put in a TeX box. Note that spaces at the beginning
and end of labels will normally be suppressed; if you really want
them you must protect them with TeX braces.

     This package temporarily sets the \mathsurround parameter
of TeX to zero  while the labels are being affixed. This is done
because nonzero \mathsurround space would influence the position
of left and right aligned labels; then, when a texpert or printer
modifies mathsurround, diagram labeling might be disastrously
altered. There is a small price to pay involving labels that are
formatted as caption boxes including mathematics: you  may want or
need to specify an explicit mathsurround space within the caption
box; it will not influence anything outside.

     Those hostile to the use of * as separator between
the X and Y coordinates of label insertion points, are free to
impose another using \SetXYSeparator{<the new separator>}.  
Americans may prefer "," to "*" since they never use a 
comma as a decimal point; on the other hand, * may be more visible.

APPENDIX (I)  MERGING labelfig.tex LABELS INTO AN EPSF GRAPHICS OBJECT.

     As promised in the introduction, here is a recipe useful for
publishers. It works at least on Macintosh and at least for vectorized
graphics and Adobe type1 fonts.  (There is surely a similar recipe for
PCs under MSWindows.)

 (a)  Use boxedeps.tex utility to integrate the figure given by the eps
file, "x.eps" say, with a visible frame around it.  See
\ShowDisplacementBoxes command in boxedeps.tex.  To get precise results
automatically it is important to use the \Trim... commands of
boxedeps.tex making the "DisplacementBox" neatly fit the figure.

 (b)  Use the TeX printer driver and LaserWriter (versions >= 8.1.1) to
export to an EPSF the DVI page containing the integrated, labelled
figure. You now have an EPS file  "xx.eps"  that contains too much, and at
the wrong scale, and at wrong position.

 (c)  Convert the EPSF to an Adode Illustrator format EPSF using
the shareware utility called epsConvert by Sam Weiss
1993-- (currently $25).

 (d)  In Illustrator (or a compatible program), group the labels and the
"DisplacementBox"; copy them to the clipboard and paste them into "x.ps".
This step requires that all the label fonts be "visible to the Macintosh.

 (e)  Translate and scale the pasted group consisting of the labels plus
the "DisplacementBox" so as to make the "DisplacementBox" the bounding
box of (labelless) figure represented by "x.eps".  At this point the
labels will be correctly placed on the figure "x.eps".

 (f)  Ungroup and delete the "DisplacementBox".  The result is the
desired single EPS file, "x+.eps" say, It contains the original figure
plus its labels.  

     Using grouping and ungrouping appropriately in "x+.eps", a
publisher's staff can very efficiently improve label positions etc.

APPENDIX II)  SOME EXOTIC APPLICATIONS

     The grid of labelfig.tex is analogous to a light-table in
classical page makeup with wax or latex glue.  In principle, you
can use it to compose any page from its indivisible parts.  This
even has some of the artisanal charm of classical paste-up
provided you have a fast screen preview to make the process
"interactive".

     In practice labelfig.tex is a tool for nonstandard jobs.
Here are a few going beyond the labelling already discussed.

(I)  GRAPHICS INTEGRATION.

     This is accomplished by treating the imported graphics
objects as labels.  The underlying graphics object is then
typically an empty  \vbox to <dimension>{\vfill} in a TeX
\midinsert...\endinsert construction.  A label line
might be of the form

   (.1*.1) \special{... MyFigure ...}\\

The exact form of the special command varies from driver to
driver.  However, in the case of encapsulated PostScript graphics
(EPSF norm), by relying on boxedeps.tex, one can have the
following standard syntax (independant of driver  (see
boxedeps.doc for details.
  
  (.1*.1) \BoxedEPSF{MyFigure scaled <scale in mils>}\\

This may be slow since it requires TeX to read the PostScript
file to read bounding box using many complex macros.  So you
may want to try

  (.1*.1) \EPSFSpecial{MyFigure}{<scale in mils>}\\

which is fast and driver independant, but it squashes the
bounding box, normally to its lower left corner.

     Similarly for graphics of the Macintosh PICT norm ---
using BoxedArt.tex (same sources) in place of boxedeps.tex.

     This approach to integration is to be recommended when
one is assembling a composite graphics object.

 (II)  COMMUTATIVE DIAGRAM ENHANCEMENT

     Commutative diagrams or arrays of mathematical objects
connected by arrows of various sorts are common in mathematics.
The mathematical objects require the use of TeX.  Recently TeX
acquired a good collection of arrows of all slopes --- that of
LamSTeX --- plus pwerful macros to build the diagrams.

     However, even the LamSTeX collection is often
inadequate; it lacks for example double shafted arrows, dotted
arrows and curved arrows. Fortunately it is possible to produce
such arrows on an individual basis using sophisticated graphics
programs such as Illustrator and AldusFreehand (both serving
the EPSF norm) or using Metafont (with its public domain norm).
Since the creation of each new arrow is a work of love, you
probably want to limit the number of arrows by using LamSTeX
for most arrows. The 40K commutative diagram module of LamSTeX
has been adapted to work with AmSTeX and a copy may be posted
with LabelFig and related files. Unfortunately no one has yet
offered a version that works with Plain TeX or LaTeX.

       Suffice it here to say that when the exotic arrow has
been somehow imported into TeX, labelfig.tex treats it as a
label that one affixes to the commutative diagram.  Two other
steps will be treated in separate notes, namely the matter of
extracting the dimension specifications for the arrow and the
construction of the arrow --- for these steps are far from
unique and often depend intimately on your computer environment. 
Notes for the Macintosh-Textures-Illustrator combination are
found in the file ExoticArrows.doc.

 (III) NESTING 

Ingenuity pays off in exploiting labelfig.tex. One can
mix graphics and typography quite freely.  labelfig.tex is good
for freeform or overlapping arrangements, while boxedeps.tex (or
BoxedArt.tex) is best for regimented non-overlapping
arrangements --- and the two can be combined.

     The default behavior of labelfig.tex is not ideal 
for nesting objects, because to prevent trouble for beginners
the register for labels is globally cleared when \AffixLabels
concludes.  But there are switches available

      \LabelsGlobal      \LabelsLocal

which change this.  To understand this, extend the above test 
by something like:


 \LabelsLocal

 \SetLabels
    (.5*.5) AAA\\
 \endSetLabels

 {
 \SetLabels
    (.5*.5) ZZZ\\
 \endSetLabels
   \AffixLabels{\FirstQuadrant}
 }

   \AffixLabels{\FirstQuadrant}


     There are however potential pitfalls.  Neither
labelfig.tex nor boxedeps.tex has been tested under extreme
conditions. Problems may occur if their procedures are
indiscriminately nested. For boxedeps.tex (not labelfig.tex)
there is a precise cause for worry, namely many of its
variables are "global", which means that TeX braces will not
provide the protection one might expect.

COMMAND SUMMARY FOR labelfig.tex

  Here [...] means optional (one or zero)
       [...]* means any number of such constructs

  \SetLabels
    [[<P>](<X><Sep><Y>) <label> \\]*
  \endSetLabels
  \ShowGrid  
  \AffixLabels{<the figure>}

   --- <P> is tack position, one of eleven or empty
              order irrelevant

                   \L\T      \T      \R\T

                   \L\E      \E      \R\E

                     \L               \R

                   \L\B      \B      \R\B

   --- (<X><Sep><Y>) insertion point;
  <Sep> is separator, = * by default;
  \SetXYSeparator{<Sep>} changes it.
   <X> and <Y> are real numbers

  --- <label> a label to attach 

  --- <the figure> the figure to label 

  \GlobalLabels (default)     
  \LocalLabels  setting for nested constructs.

 \Grids makes ALL grids appear; \HideGrid then makes just next disappear.
 \noGrids returns to default.  The commands are always global.

 \GridLineWidth{<dimension>} adjusts width of grid lines. Default is very
small, to give "hairline" effect. If your grid lines are missing try
setting \GridLineWidth{1pt}.

 \Edges#1 globally changes the edge size of all grids to the numerical 
value #1, which must be 1, 10, 100, or 1000.  The default is 1.

VERSION HISTORY.
 --- Jan 1993: \Edges#1 and [??] option after \SetLabels
 --- July 1992: \Grids, \noGrids, \HideGrid;
       Gridlines become hairlines; \GridLineWidth{<dimension>}.
 --- Oct 1991, Jan 1992: \SetXYSeparator{<Sep>},  \LabelsGlobal,
       \LabelsLocal.
 --- July 1991: first release

Address for bugs and other feedback:

        Raymond S\'eroul
        IREM and Lab. de Typographie Informatise
        Univ. Rene Descartes
        Strasbourg

    Tel 33-88-41-63-45
    Email:  A18645@FRCCSC21.BITNET

        Laurent Siebenmann
        Mathematique, Bat. 425,
        Univ de Paris-Sud,
        91405-Orsay,
        France

    Tel 33-1-6941-7949; 
    Email: lcs@topo.math.u-psud.fr

\def\scalefig#1{\epsfxsize #1\textwidth}
\def\defeq{\stackrel{\Delta}{=}}

\def\tcr{\textcolor{red}}
\def\tcb{\textcolor{blue}}

\def\tcr{\textcolor{red}}
\def\tcb{\textcolor{blue}}

\def\bf{\textbf}

\newcommand {\Ebb}{{\mathbb{E}}}

\newtheorem{theorem}{Theorem}

\setcounter{footnote}{1}

\normalsize


\begin{document}

\title{ \LARGE Enhancing Non-Orthogonal Multiple Access By Forming  Relaying Broadcast Channels}

\author{  Jungho So, {\em Student~Members, IEEE}, and Youngchul
Sung$^\dagger$\thanks{$^\dagger$Corresponding author}, {\em
Senior~Member, IEEE}  \\
\thanks{The authors are with Dept. of Electrical Engineering,  KAIST, Daejeon 305-701, South
Korea. E-mail:\{jhso, ycsung\}@kaist.ac.kr.
This research was supported in part by Basic Science Research Program through the National Research Foundation of Korea (NRF) funded by the Ministry of Education (2013R1A1A2A10060852).
}
}

\markboth{Submitted to IEEE Transactions on Communications, \today}%
{}

\maketitle

\begin{abstract}
In this paper, using relaying broadcast channels (RBCs) as
component channels for non-orthogonal multiple access (NOMA) is
proposed to enhance the performance of NOMA in single-input
single-output (SISO) cellular downlink systems. To analyze the
performance of the proposed scheme, an achievable rate region of a
RBC with compress-and-forward (CF) relaying is newly derived based
on the recent work of noisy network coding (NNC).  Based on the
analysis of the achievable rate region of a RBC with
decode-and-forward (DF) relaying, CF relaying, or CF relaying with
dirty-paper coding (DPC) at the transmitter,  the overall system
performance of NOMA equipped with RBC component channels is
investigated.  It is shown that NOMA with RBC-DF yields marginal
gain and NOMA with RBC-CF/DPC yields drastic gain over the simple
NOMA based on broadcast component channels in a practical system
setup.
 By going beyond simple broadcast channel (BC)/successive interference cancellation (SIC) to
advanced multi-terminal encoding including DPC and CF/NNC, far
larger gains can be obtained for NOMA.
\end{abstract}

\begin{IEEEkeywords}
Relaying broadcast channel, non-orthogonal multiple access,
decode-and-forward, compress-and-forward, dirty-paper coding.
\end{IEEEkeywords}

\IEEEpeerreviewmaketitle

\section{Introduction}

{\em Motivation:} To meet the exponentially growing demand for
high date rates in next generation wireless communication systems,
enhancing existing lower band wireless systems as well as
introducing new  bandwidths in higher bands is under vigorous
efforts \cite{Li&Niu&Papathanassiou&Wu:14VT}. One of the
technologies for increasing the spectral efficiency of cellular
systems is recently proposed non-orthogonal multiple access (NOMA)
\cite{Saito&Kishiyama&Benjebbour&Nakamura&Li&Higuchi:13VTC,
Ding&Peng&Poor:15ComLetter}. Traditionally, the wireless communication
resources in cellular systems such as time and frequency bandwidth
were divided into orthogonal resource blocks, and within a
separated resource block only one user is served by the base
station (BS).  In NOMA, however, multiple users are served
non-orthogonally within each resource block by exploiting the
power domain. From the system perspective, such user allocation
can be regarded as {\em system overloading} with which the number
of served users is larger than that of orthogonal resource blocks.
Since multiple users are served non-orthogonally within each
resource block with such overloading,  the signal from some users
allocated to a resource block interferes with other users
allocated to  the same resource block, but such interference is
eliminated by partial user cooperation and non-linear decoding
like successive interference cancellation (SIC). For the example
of two user allocation in the same resource block, two users with
different channel gains are grouped into a resource block so that
one user has a higher channel gain (i.e., is close to the BS) and
the other user has a lower channel gain (i.e. is far from the BS).
Then, the signals of the two users are added and transmitted. At
the receiver side, the user close to the BS decodes not only its
data but also the data for the  user far from the BS, and cancels
the signal of the user far from the BS from its received signal.
On the other hand, the user far from the BS just decodes its data
by treating the interference from the user close to the BS as
noise. This is possible due to the asymmetry of the channel gains
of the two users since with power control the user close to the BS
requires less power than the user far from the BS and thus the
user close to the BS can decode the data intended for the user far
from the BS if the user far from the BS can decode its own data.
It has been shown  that such system overloading based on NOMA can
yield non-trivial spectral efficiency increase
\cite{Saito&Kishiyama&Benjebbour&Nakamura&Li&Higuchi:13VTC}.

From the perspective of information theory,  the BS and the unique
served user within a separated resource block form a
point-to-point (P2P) channel in conventional
orthogonalization-based cellular systems. However, under NOMA  a broadcast channel
(BC) is formed by the BS and the users allocated to the same
resource block within a separated resource block. Indeed, it is
known that a single-input single-output (SISO) Gaussian BC (GBC) is a degraded BC and the
aforementioned  super-position coding and SIC achieve its capacity
region \cite{Cover&Thomas:91,Gamal&Kim:11NIT}.  Thus, the rate increase by NOMA is
due to the change of the channel within each resource block from a P2P channel to a BC
since the capacity of a given channel does not change.  The
penalty for the rate increase is the required cooperation between
the served users and the increase in the transmitter and/or
receiver side processing.

Some modification has been made to enhance the aforementioned simple NOMA
 by increasing the level of the cooperation between the
served users  and changing the type of channel within a resource
block \cite{Ding&Peng&Poor:15ComLetter} with the consideration of the
recently available device-to-device (D2D) communication capability
\cite{Doppler&Rinne&Wijting&Ribeiro&Hugl:09ComMag,Fodor&Dahlman&Mildh&Parkvall&Reider&Miklos&Turanyi:12ComMag}.
In \cite{Ding&Peng&Poor:15ComLetter}, the authors considered a
two-phase (half-duplex) cooperative NOMA in which the BS
broadcasts data to both users in the first phase, and the user
with good channel helps the other user by transmitting the data
for the other user decoded at its site in the first phase to the
other user in the second phase. That is, the user with good
channel serves as a half-duplex decode-and-forward (DF)
relay.\footnote{In the case of more than two users in a resource
block, the same idea can be extended to a multi-phase cooperative
NOMA \cite{Ding&Peng&Poor:15ComLetter}.}
  However, such half-duplexing reduces the data rate by half and the resulting system has limitation to increase the system rate.

{\em Summary of Results:} In this paper, we further enhance the
performance of NOMA by introducing full-duplex relaying at the
user with good channel and several relevant encoding schemes at
the BS  for the case of two-user allocation within each resource
block which seems most practical with consideration of performance
gain and complexity.
  When the user with good channel serves as a full-duplex relay, the BS and the two served users form a
 {\em relaying broadcast channel (RBC)} from the perspective of information theory \cite{Liang&Veeravalli:04ISIT,Bross:09IT}.
 A RBC is different from a BC in that one of the receivers serves as a relay as well as a receiver for its own data,
as shown in Fig. \ref{fig:RBCchannel} in Section \ref{sec:componentchannel}.   There exist several known
relaying methods such as amplify-and-forward (AF), DF, and
compress-and-forward (CF)
\cite{Cover&Gamal:79IT,Kramer&Gastpar&Gupta:05IT}. In this paper,
we consider DF and CF relaying for performance improvement. AF is
not relevant in RBCs for NOMA since AF in RBCs amplifies the
signal intended to the relaying receiver as well as the signal
intended for the other receiver and directly transmits the
amplified sum to the other receiver.
 Several information-theoretical achievable rate region analyses were performed on RBCs.
  In \cite{Liang&Veeravalli:04ISIT}, the authors studied the achievable rate region of a RBC with a DF relaying receiver and showed
  that the achievable rate region of a RBC subsumes that of the BC generated by eliminating  the link between the relaying receiver and the other  receiver in the SISO Gaussian case.
In \cite{Bross:09IT}, the author considered the achievable rate
region of a RBC  employing CF with common information based on
\cite{Cover&Gamal:79IT}. However, the encoding scheme at the BS
proposed in \cite{Bross:09IT} is complicated and does not provide
much insight.  Furthermore, we are not much interested in the case
with common information for both receivers. Thus, we here simplify
the problem by eliminating the common information and derive an
achievable rate region of a RBC employing CF based on the recent
work of noisy network coding (NNC)\footnote{Noisy network coding
for the case of three nodes composed of a transmitter, a relay and
a receiver can be viewed as a simplified CF scheme.} in
\cite{Lim&Kim&Gamal&Chung:11IT}. Note that the setup of RBC and
that of NNC are different in that a transmitter in NNC has only
one message possibly intended for many receivers but in RBC the
transmitter has two messages intended for two different users.
Although the channel setup is different, we still apply the NNC
encoding scheme with some modification appropriate to RBC and
derive an achievable rate region of a RBC with CF/NNC.
Furthermore, based on this result, we derive an achievable rate
region of a RBC with CF/NNC when dirty-paper coding (DPC) \cite{Costa:83IT} is
applied at the transmitter.

To evaluate the overall system performance of the proposed NOMA
with RBC, we consider two  user pairing and scheduling methods:
near-far pairing and nearest-neighbor pairing. These two pairing
methods are opposite to each other and provide two extreme pairing
on which the performance of different NOMA schemes can be
compared.  Based on the achievable rate region result for a RBC
with DF in \cite{Liang&Veeravalli:04ISIT} and the newly derived
achievable rate region result for a RBC with CF/NNC  or CF/NNC/DPC
in this paper for each resource block, the overall system
performance gain of the proposed NOMA with RBC is examined under
the two user pairing and scheduling methods. Numerical results
show that the gain of  NOMA with  RBC-DF is marginal, but NOMA
with RBC-CF/NNC/DPC yields drastic gain over the simple NOMA based
on GBC/SIC
\cite{Saito&Kishiyama&Benjebbour&Nakamura&Li&Higuchi:13VTC} in a
practical system setup.

{\em Notations and Organization:} We will make use of standard
notational conventions. Vectors are written in boldface  in
lowercase letters.  Random variables are written in capitals and
the realizations of random variables are written in  lowercase
letters. For a random variable $X$, $\Ebb\{X\}$ denotes the
expectation of $X$, and $X\sim\mathcal{CN}(\mu,\Sigma)$ means that
$X$ is circularly-symmetric complex Gaussian-distributed with mean
$\mu$ and covariance $\Sigma$.

The remainder of this paper is organized as follows. In Section
\ref{sec:systemmodel}, the system model is described. In Section
\ref{sec:componentchannel}, the achievable rate region of a RBC is
given in general discrete memoryless channel and Gaussian channel
cases. The considered user pairing and scheduling methods are
described in Section \ref{sec:pairing}. Numerical results are
provided  in Section \ref{sec:numericalresults}, followed by
conclusion in Section \ref{sec:conclusion}.

\section{System Model}
\label{sec:systemmodel}

In this paper, we consider  a single-cell SISO downlink system
with a single-antenna BS and $K$ single-antenna users, where the
considered cell topology is a typical $120^o$ sector of a disk and
each user is distributed uniformly in the sector, as shown in Fig.
\ref{fig:cellStruc}.  We assume that we have $B$ communication
resource blocks that are orthogonal to each other.\footnote{For
example, such resource orthogonalization can be attained by OFDM
or other orthogonalization techniques. In the case of OFDM, one
resource block represents a subcarrier or a chunk of subcarriers.}
The BS selects and assigns $M$ users to each resource
block\footnote{The scheduling and grouping method will be
explained in Section \ref{sec:pairing}.}  and we assume that $K
\ge BM$ to incorporate the impact of multi-user diversity in our
system performance investigation. In particular, we focus on the
case of $M=2$ in this paper.
\begin{figure}[!hbtp]
\centering
\includegraphics[width=9cm]{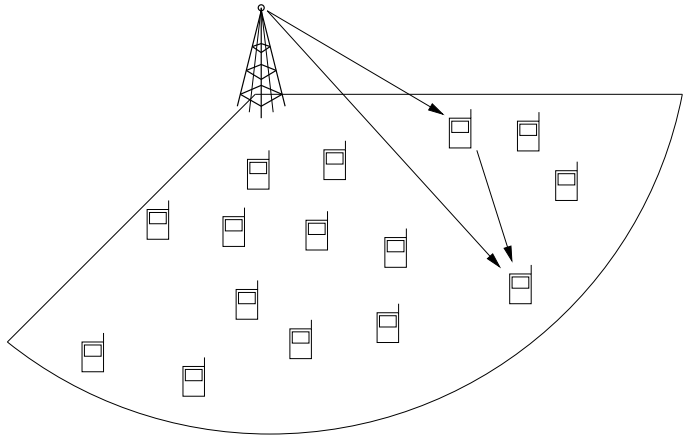}
\caption{The considered single-cell SISO downlink system}
\label{fig:cellStruc}
\end{figure}
Since resource blocks are orthogonal to each other, we can
consider each resource block separately. Let the indices of the
users scheduled to resource block $b$ be $1_b$ and $2_b$ with
$1_b,2_b \in \{1,2,\cdots,K\}$, $b=1,2,\cdots,B$, and let the
channel gains from the BS to users $1_b$ and $2_b$ be
$h_{01_b}^{(b)}$ and $h_{02_b}^{(b)}$, respectively. (Here,
$h_{0k}^{(b)}$ is the channel gain from the BS to user $k$,
$k=1,2,\cdots,K$ at resource block $b$.) We assume that the
indices $1_b$ and $2_b$ are ordered such that $|h_{01_b}^{(b)}|
\ge |h_{02_b}^{(b)}|$. We assume that the BS and users $1_b$ and
$2_b$ form a RBC with user $1_b$ acting as a relaying receiver.
Then, the received signals $Y_1^{(b)}$ and $Y_2^{(b)}$ at users
$1_b$ and $2_b$ in resource block $b$ are respectively given by
\begin{align}
Y_1^{(b)} &= h_{01_b}^{(b)}X_0^{(b)} + Z_1^{(b)},   \label{eq:yibdatamodel}\\
Y_2^{(b)} &= h_{02_b}^{(b)}X_0^{(b)} + h_{1_b2_b}^{(b)}X_1^{(b)} +
Z_2^{(b)} \label{eq:yjbdatamodel}
\end{align}
where $h_{1_b2_b}^{(b)}$ represents the channel gain of the link
from user $1_b$ to user $2_b$, $X_0^{(b)}$ is the transmit signal
at the BS in resource block $b$  with power constraint
$\Ebb\{|X_0^{(b)}|^2\} \le P_{0}^{(b)}$, $X_1^{(b)}$ is the
transmit signal at the relaying user $1_b$ in resource block $b$
with power constraint $\Ebb\{|X_1^{(b)}|^2\} \le P_1^{(b)}$,  and
$Z_{1}^{(b)}\sim\Cc\Nc(0,N_{1}^{(b)})$ and
$Z_{2}^{(b)}\sim\Cc\Nc(0,N_{2}^{(b)})$ are the zero-mean additive
circularly-symmetric complex Gaussian noise at users $1_b$ and
$2_b$ in resource block $b$, respectively. Let the rates of users
$1_b$ and $2_b$ for resource block $b$ be $R_{1}^{(b)}$ and
$R_{2}^{(b)}$, respectively. Then, the overall system sum rate
$R_{sum}$ is given by
\begin{equation}
R_{sum}= \sum_{i=1}^B (R_{1}^{(b)} + R_{2}^{(b)}).
\end{equation}
The system sum rate $R_{sum}$ is a function of the component
channel rates $(R_1^{(b)},R_2^{(b)})$ for resource block $b$ and the
scheduling and grouping method.

\section{Component Channel Analysis: The Relaying Broadcast Channel}
\label{sec:componentchannel}

In this section, we analyze the achievable rate region of a
component RBC composed of the BS and users $1_b$ and $2_b$ for
each resource block $b$, which is the backbone for the later stage of this
paper.  As mentioned already, we consider DF and CF relaying for
user $1_b$ since the relative performance of DF and CF depends on the channel situation but the rate of AF is always worse than the better of DF and CF
 \cite{Kramer&Gastpar&Gupta:05IT}. Note that a RBC is different
from a relay channel since the transmitter sends two information
messages: one for the relaying receiver and the other for the
other receiver. We shall call RBC with DF and CF RBC-DF and
RBC-CF, respectively.  In the following subsections, we
investigate the achievable rate regions of RBC-DF and RBC-CF in the
discrete memoryless channel case first. Based on the result in the
discrete memoryless channel case, we obtain the achievable rate
regions of RBC-DF and RBC-CF in the Gaussian channel case next.

\begin{figure*}[h]
\centerline{
\SetLabels
\L(0.21*-0.1) (a) \\
\L(0.75*-0.1) (b) \\
\endSetLabels
\leavevmode
\strut\AffixLabels{ \psfrag{a1}[t][t]{\small Transmitter}
\psfrag{a1}[t][t]{\small Transmitter} %
\psfrag{a2}[t][t]{\small Receiver 1} %
\psfrag{a3}[t][t]{\small } %
\psfrag{a4}[t][t]{\small Receiver 2} %
\psfrag{a5}[t][t]{\small } %
\psfrag{b1}[t][t]{\small Transmitter} %
 \psfrag{b2}[t][t]{\small Relaying} %
 \psfrag{b3}[t][t]{\small Receiver} %
\psfrag{b4}[t][t]{\small Second} %
\psfrag{b5}[t][t]{\small Receiver} %
\psfrag{0}[c]{\small 0} %
\psfrag{1}[c]{\small 1} %
\psfrag{2}[c]{\small 2} %
\scalefig{0.4}\epsfbox{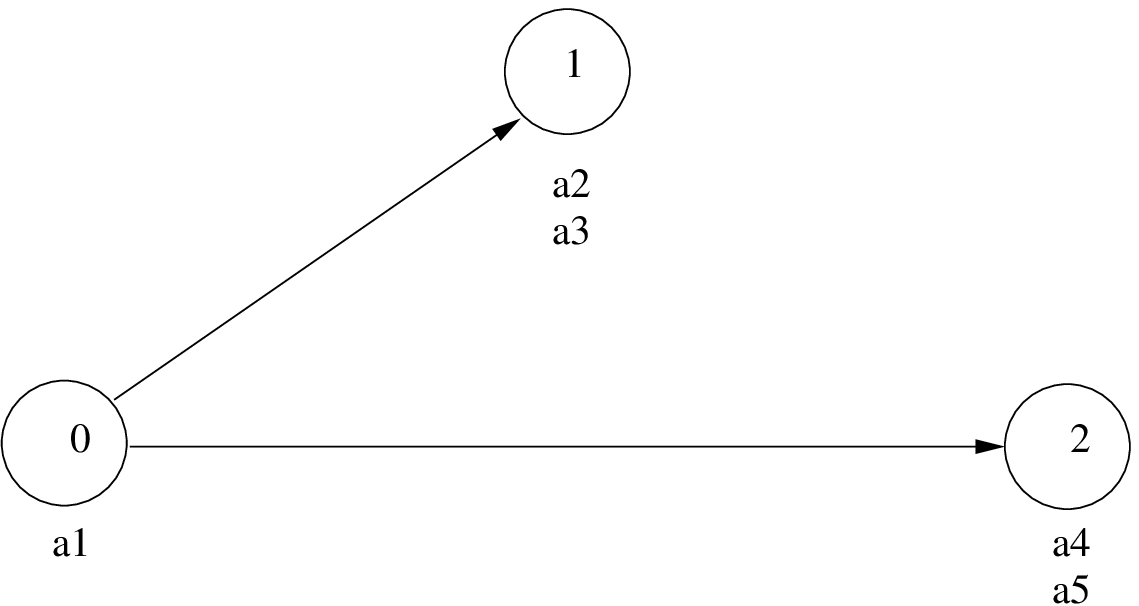}
\hspace{1cm} \scalefig{0.4}\epsfbox{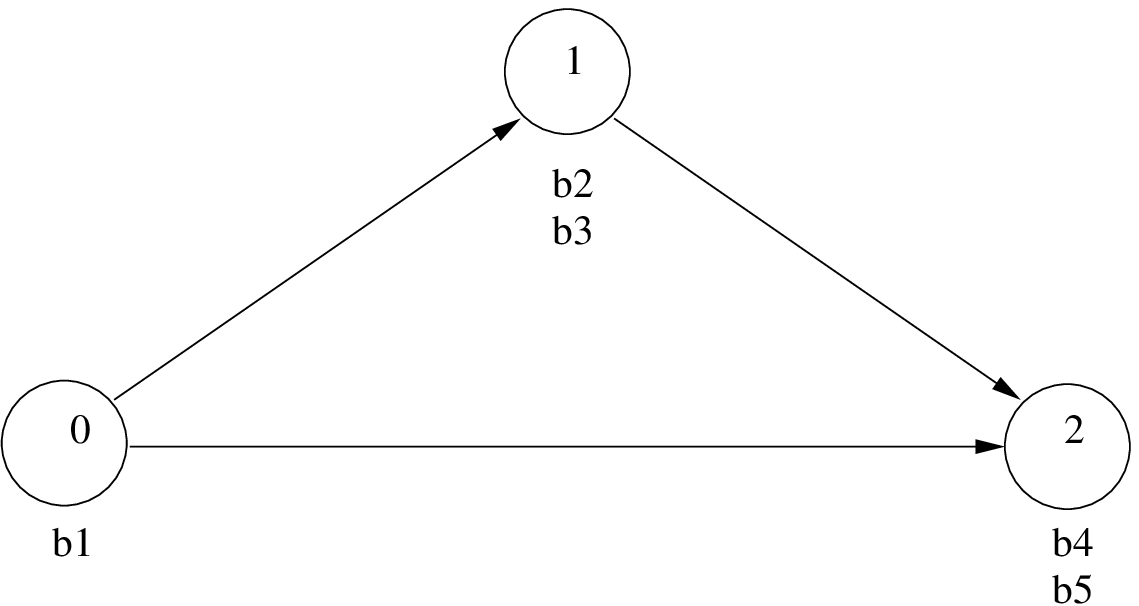} } }
\vspace{1.5em} \caption{(a) a broadcast channel (BC) and (b) a
relaying broadcast channel (RBC)} \label{fig:RBCchannel}
\end{figure*}

\subsection{The  Discrete Memoryless Case}
\label{sec:GeneralCase}

A general  RBC is a  3-node discrete memoryless network composed
of node 0 (the transmitter), node 1 (called the relaying
receiver), and node 2 (called the second receiver), as depicted in
Fig. \ref{fig:RBCchannel}(b), defined by
\begin{equation} \label{eq:RBCchanneldef}
(\Xc_0 \times
\Xc_1,p(y_1,y_2|x_0,x_1),\Yc_1\times\Yc_2),
\end{equation}
 where $\Xc_0$ and $\Xc_1$ are the input
alphabets of nodes 0 and 1, respectively;  $\Yc_1$ and $\Yc_2$ are the output alphabets of nodes 1 and 2, respectively;  and
$p(y_1,$ $y_2|x_0,x_1)$ is the channel transition probability mass
function.

From here on, we investigate the achievable rate region of the
considered RBC. First, we consider the case that node 1 does not
transmit  signal to node 2, i.e. $\Xc_1=\emptyset$. Then, the
channel reduces to a 2-user BC  and the capacity
region of a degraded BC is given by the following
theorem.

\vspace{0.5em}

\begin{theorem}\cite{Gamal&Kim:11NIT}\label{thm:dmbc}
The capacity region of the degraded discrete memoryless BC $(\Xc_0, p(y_1,y_2|x_0),$ $\Yc_1\times\Yc_2)$ is the set of rate pairs $(R_1,R_2)$ such that
\begin{align}
R_1 &< I(X_0;Y_1|U)  \label{eq:DMBC1}\\
R_2 &< I(U;Y_2) \label{eq:DMBC2}
\end{align}
for some pmf $p(u,x_0)$, where the cardinality of the auxiliary random variable $U$ satisfies $|\Uc|\le \min\{|\Xc_0|,|\Yc_1|,|\Yc_2|\}+1$. Here, $R_1$ and $R_2$ are the rates of nodes 1 and 2, respectively.
\end{theorem}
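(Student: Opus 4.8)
\noindent\emph{Proof strategy.} This is the classical superposition-coding characterization of the degraded DMBC, so I would follow the standard achievability/converse split.

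\textbf{Achievability.} The plan is to use superposition coding, treating node~2 as the degraded (weaker) receiver. I would fix an input pmf $p(u,x_0)=p(u)p(x_0|u)$, generate $2^{nR_2}$ cloud centers $u^n(w_2)$ each i.i.d.\ $\sim p(u)$, and for each $w_2$ generate $2^{nR_1}$ satellite codewords $x_0^n(w_1,w_2)$ conditionally i.i.d.\ $\sim \prod_i p(x_{0i}\,|\,u_i(w_2))$; to send $(w_1,w_2)$ the transmitter emits $x_0^n(w_1,w_2)$. The weaker receiver (node~2) performs joint-typicality decoding of $w_2$ only, which succeeds with high probability provided $R_2 < I(U;Y_2)$. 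The stronger receiver (node~1) first decodes $w_2$ and then $w_1$ given the recovered cloud center, requiring $R_2<I(U;Y_1)$ and $R_1<I(X_0;Y_1|U)$ respectively. The key observation is that degradedness gives the Markov chain $U\to X_0\to Y_1\to Y_2$, hence $I(U;Y_1)\ge I(U;Y_2)$, so the constraint $R_2<I(U;Y_2)$ already guarantees node~1 can peel off the cloud center; eliminating the redundant inequality leaves exactly \eqref{eq:DMBC1}--\eqref{eq:DMBC2}.

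\textbf{Converse.} Starting from any sequence of codes with vanishing average error probability, I would apply Fano's inequality together with the independence of $W_1,W_2$ to get $nR_2 \le I(W_2;Y_2^n)+n\epsilon_n$ and $nR_1 \le I(W_1;Y_1^n\,|\,W_2)+n\epsilon_n$ with $\epsilon_n\to0$. I would then identify the auxiliary variable $U_i \defeq (W_2,Y_2^{i-1})$. Chain-ruling $I(W_2;Y_2^n)=\sum_i I(W_2;Y_{2i}\,|\,Y_2^{i-1})$ yields the per-letter bound $R_2 \le \frac{1}{n}\sum_i I(U_i;Y_{2i})+\epsilon_n$ immediately, while writing $I(W_1;Y_1^n|W_2)=\sum_i I(W_1;Y_{1i}\,|\,W_2,Y_1^{i-1})$ and bounding each summand by $I(X_{0i};Y_{1i}\,|\,U_i)$ gives the $R_1$ bound, using that $X_{0i}$ is a function of $(W_1,W_2)$ and that the memoryless degraded channel yields $U_i\to X_{0i}\to Y_{1i}\to Y_{2i}$ per letter. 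A time-sharing variable $Q$ uniform on $\{1,\dots,n\}$, with $U\defeq(Q,U_Q)$, $X_0\defeq X_{0Q}$, $Y_1\defeq Y_{1Q}$, $Y_2\defeq Y_{2Q}$, then produces the single-letter region as $n\to\infty$.

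\textbf{Main obstacle and cardinality.} The delicate point I expect to be hardest is reconciling the two bounds under a \emph{single} auxiliary: the $R_2$ bound naturally carries $Y_2^{i-1}$ in its conditioning while the $R_1$ expansion naturally carries $Y_1^{i-1}$. I would resolve this with the Csisz\'ar sum identity, which exchanges the cross terms of the form $\sum_i I(Y_2^{i-1};Y_{1i}|\cdots)$ and $\sum_i I(Y_1^{i-1};Y_{2i}|\cdots)$, after which degradedness forces the residual difference to carry the correct sign so that both chains close with the common $U_i=(W_2,Y_2^{i-1})$. Finally, the cardinality bound $|\Uc|\le\min\{|\Xc_0|,|\Yc_1|,|\Yc_2|\}+1$ would follow from the Fenchel--Eggleston--Carath\'eodory strengthening of the support lemma: $|\Xc_0|-1$ parameters are needed to preserve the marginal $p(x_0)$ together with the two mutual-information functionals, and a standard counting argument then caps $|\Uc|$ at the stated value.
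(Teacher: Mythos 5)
First, a point of reference: the paper offers no proof of this theorem at all --- it is quoted from \cite{Gamal&Kim:11NIT} --- so your proposal can only be measured against the standard textbook argument. Your achievability half is exactly that argument and is sound: superposition coding with cloud centers $u^n(w_2)$ and satellites $x_0^n(w_1,w_2)$, successive decoding at the strong receiver, and elimination of the redundant constraint $R_2<I(U;Y_1)$ via degradedness. One omission worth fixing: since ``degraded'' here means stochastically degraded, you should first observe that the capacity region depends only on the conditional marginals $p(y_1|x_0)$ and $p(y_2|x_0)$, so that one may assume physical degradedness throughout; this reduction is also what licenses the Markov chain $U\to X_0\to Y_1\to Y_2$ you invoke, and it is needed again in the converse.

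The converse has a genuine gap, precisely at the step you flag as the main obstacle. With $U_i=(W_2,Y_2^{i-1})$ the $R_2$ chain is immediate, but the $R_1$ chain requires passing from $\sum_i I(W_1;Y_{1i}\mid W_2,Y_1^{i-1})$ to $\sum_i I(X_{0i};Y_{1i}\mid W_2,Y_2^{i-1})$, and the per-letter Markov chain $U_i\to X_{0i}\to Y_{1i}$ does not do this because the conditioning variables differ ($Y_1^{i-1}$ versus $Y_2^{i-1}$). The tool you propose to bridge them is the wrong one: the Csisz\'ar sum identity exchanges the \emph{past} of one output with the \emph{future} of the other, $\sum_i I(Y_{2,i+1}^n;Y_{1i}\mid Y_1^{i-1},W)=\sum_i I(Y_1^{i-1};Y_{2i}\mid Y_{2,i+1}^n,W)$; the two-pasts exchange you describe is not an identity, and no sign argument from degradedness rescues the step as written. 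Two standard repairs exist. (a) Use $U_i=(W_2,Y_1^{i-1})$ instead: then $I(W_1;Y_{1i}\mid U_i)\le I(X_{0i};Y_{1i}\mid U_i)$ is immediate, and physical degradedness gives $Y_2^{i-1}\to(W_2,Y_1^{i-1})\to Y_{2i}$, whence $I(W_2;Y_{2i}\mid Y_2^{i-1})\le I(W_2,Y_1^{i-1},Y_2^{i-1};Y_{2i})=I(U_i;Y_{2i})$. (b) Keep $U_i=(W_2,Y_2^{i-1})$ but replace the sum-identity appeal by Gallager's entropy-difference argument: write $I(W_1;Y_1^n\mid W_2)=\sum_i\bigl[H(Y_{1i}\mid W_2,Y_1^{i-1})-H(Y_{1i}\mid X_{0i})\bigr]$ and use $H(Y_{1i}\mid W_2,Y_1^{i-1})=H(Y_{1i}\mid W_2,Y_1^{i-1},Y_2^{i-1})\le H(Y_{1i}\mid W_2,Y_2^{i-1})$, which again relies on physical degradedness. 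Either way no Csisz\'ar sum identity is needed for the degraded BC; it only enters for the strictly larger ``more capable'' class. Your cardinality sketch is the right idea (support lemma preserving $p(x_0)$ and two information functionals gives $|\Uc|\le|\Xc_0|+1$), though tightening to the minimum over the output alphabets requires an additional argument you have not supplied.
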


\vspace{0.5em}

It is known that the capacity region of a degraded BC can be achieved by superposition coding and SIC. This can
be seen in the rate formulae \eqref{eq:DMBC1} and
\eqref{eq:DMBC2}. Here, the auxiliary random variable $U$ is
associated with the message to node 2. In \eqref{eq:DMBC1}, $R_1$
is bounded by the mutual information between the transmitted
signal variable $X_0$ and node 1's received signal variable $Y_1$
conditioned on $U$. Conditioning can be viewed interference
cancellation and means that node 1 decodes the message associated
with node 2. On the other hand, $R_2$ is bounded simply by the
mutual information between its message variable $U$ and its
received signal variable $Y_2$.  The above capacity region result
is used in the simple NOMA
\cite{Saito&Kishiyama&Benjebbour&Nakamura&Li&Higuchi:13VTC}.

Now, consider the RBC scheme with DF at node 1. In this case,
contrary to the 2-user BC, we have
$\Xc_1\ne\emptyset$, which means that  node 1  not only decodes
the data for itself  but also actively helps  node 2. When the DF
relaying scheme is applied to the considered RBC, we have the
following rate region result given by
\cite{Liang&Veeravalli:04ISIT}:

\vspace{0.5em}

\begin{theorem}\cite{Liang&Veeravalli:04ISIT}\label{thm:rbcdf}
The rate pair $(R_1,R_2)$ is achievable for the RBC \eqref{eq:RBCchanneldef} if
\begin{align}
R_1 &< I(X_0;Y_1|U,X_1)   \label{eq:LVeq1}\\
R_2 &< \min\{I(U;Y_1|X_1), I(U,X_1;Y_2)\} \label{eq:LVeq2}
\end{align}
for some joint distribution $p(x_1)p(u|x_1)p(x_0|u)$, where $U$ is an auxiliary random variable associated with the message for node 2.
\end{theorem}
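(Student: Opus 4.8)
The plan is to establish achievability by a block-Markov superposition coding scheme in which node 1 plays the dual role of decoder (for its own message) and decode-and-forward relay (for node 2's message). I would transmit over $B$ blocks, splitting node 2's message into independent sub-messages $w_{2,1},\ldots,w_{2,B-1}$ (one per block, with the last block used to flush the relay), so that the effective rate penalty $\tfrac{B-1}{B}$ vanishes as $B\to\infty$. The codebook is generated to respect the factorization $p(x_1)p(u\mid x_1)p(x_0\mid u)$: first draw relay codewords $X_1^n(s)$ i.i.d.\ from $p(x_1)$ for $s\in[1:2^{nR_2}]$; then, superimposed on each, draw cloud centers $U^n(t\mid s)$ from $\prod_i p(u_i\mid x_{1i}(s))$ for $t\in[1:2^{nR_2}]$; and finally, superimposed on each $U^n$, draw $X_0^n(r\mid t,s)$ from $\prod_i p(x_{0i}\mid u_i(t\mid s))$ for $r\in[1:2^{nR_1}]$. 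In block $j$ the relay, having decoded $w_{2,j-1}$ in the previous block, transmits $X_1^n(w_{2,j-1})$, while the BS transmits $X_0^n(w_{1,j}\mid w_{2,j},w_{2,j-1})$; this makes the BS and relay signals coherent through the common index $w_{2,j-1}$, which is what produces the cooperative term for node 2.

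Decoding at node 1 is sequential within each block. Since node 1 knows the index $w_{2,j-1}$ it is currently relaying, it first looks for the unique $\hat w_{2,j}$ such that $U^n(\hat w_{2,j}\mid w_{2,j-1})$ is jointly typical with $Y_1^n$ given $X_1^n(w_{2,j-1})$, treating the $X_0$-layer (node 1's own message) as noise; by the packing lemma this succeeds with high probability provided $R_2<I(U;Y_1\mid X_1)$. Having recovered $w_{2,j}$, node 1 then decodes its own message by finding the unique $\hat w_{1,j}$ with $X_0^n(\hat w_{1,j}\mid w_{2,j},w_{2,j-1})$ jointly typical with $Y_1^n$ given $(U^n,X_1^n)$, which succeeds provided $R_1<I(X_0;Y_1\mid U,X_1)$. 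These two constraints yield \eqref{eq:LVeq1} and the first term of \eqref{eq:LVeq2}.

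For node 2, which only needs $w_2$, I would use backward decoding: after all $B$ blocks are received, decode from block $B$ downward, so that when decoding $w_{2,j}$ the index $w_{2,j+1}$ is already known. Within block $j+1$ the message $w_{2,j}$ then appears twice — through the relay codeword $X_1^n(w_{2,j})$ and, via conditioning, through the cloud center $U^n(w_{2,j+1}\mid w_{2,j})$ — so node 2 searches for the unique $\hat w_{2,j}$ making $(X_1^n(\hat w_{2,j}),U^n(w_{2,j+1}\mid\hat w_{2,j}))$ jointly typical with $Y_2^n$ under the marginal $p(u,x_1,y_2)=\sum_{x_0}p(x_1)p(u\mid x_1)p(x_0\mid u)p(y_2\mid x_0,x_1)$, treating node 1's message as noise. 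Since a wrong $\hat w_{2,j}$ randomizes both codewords, the packing lemma gives reliability provided $R_2<I(U,X_1;Y_2)$, which is the second term of \eqref{eq:LVeq2}.

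Assembling the three bounds and letting $n,B\to\infty$ gives the claimed region. The main subtlety I would be careful about is the node-2 analysis: I must verify that the coherent block-Markov construction really makes $X_1^n(w_{2,j})$ and $U^n(w_{2,j+1}\mid w_{2,j})$ jointly depend on $w_{2,j}$ with the correct conditional law, so that the chain-rule identity $I(U,X_1;Y_2)=I(X_1;Y_2)+I(U;Y_2\mid X_1)$ is exactly the gain captured by backward (equivalently, two-block sliding-window) decoding, and that marginalizing out $X_0$ is legitimate precisely because node 2 never attempts to decode $w_1$. The node-1 error events and the boundary-block (flushing) arguments are routine once the codebook and decoding order are fixed.
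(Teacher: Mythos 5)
The paper does not prove this theorem---it is quoted from Liang and Veeravalli with only a one-sentence gloss (superposition coding at node~0, SIC at node~1, plus node~1 forwarding the decoded message for node~2). Your block-Markov superposition construction with coherent relaying, sequential decoding at node~1, and backward decoding at node~2 is the standard achievability argument for exactly this region, it respects the required factorization $p(x_1)p(u\mid x_1)p(x_0\mid u)$, and each of the three rate constraints is correctly matched to its error event; this is consistent with both the paper's informal description and the cited reference, so I consider the proposal correct and essentially the intended proof.
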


\vspace{0.5em}

The achievable rate region in Theorem \ref{thm:rbcdf} can also be
obtained by using superposition coding at node 0 and SIC at node 1
similarly to the result in Theorem \ref{thm:dmbc} and in addition
by node 1's transmitting  the decoded (at node 1) data (intended
for node 2) to node 2.    In Theorem \ref{thm:rbcdf},   $U$ is an
auxiliary random variable associated with the message for node 2
and $X_0$ is the input random variable at node 0 associated with
the messages for both nodes 1 and 2.  In  (\ref{eq:LVeq1}),
conditioning on $U$ means that node 1 decodes the message
associated with node 2, and then the mutual information between
$X_0$ and $Y_1$ conditioned on $U$  is related to the rate of the
message for node 1. (Here, node 1 knows its own transmit variable
$X_1$ and thus node 1 can cancel the self-interference. This is
seen as conditioning on $X_1$ in (\ref{eq:LVeq1}).)  Regarding
\eqref{eq:LVeq2}, the first term in the right-hand side (RHS) in
\eqref{eq:LVeq2} means that the message intended for node 2 should
be decoded successfully at node 1 for DF operation and the second
term in the RHS in \eqref{eq:LVeq2} is related to the rate at
which  node 2 decodes its message ($U$) based on its received
signal $Y_2$ with the help $(X_1)$ from node 1. Taking minimum in
\eqref{eq:LVeq2} means both events should happen in this scheme.
 Note that if we
remove $X_1$, \eqref{eq:LVeq1} reduces to  \eqref{eq:DMBC1}, and
the second term in the RHS of \eqref{eq:LVeq2} reduces to
\eqref{eq:DMBC2}. Here, the first term $I(U;Y_1)$ in the RHS of
\eqref{eq:LVeq2} without $X_1$ is always larger than or equal to
the second term $I(U;Y_2)$ in the RHS of  \eqref{eq:LVeq2} without
$X_1$, i.e., $I(U;Y_1) \ge I(U;Y_2)$ due to the assumption of
degradedness $U \rightarrow Y_1 \rightarrow Y_2$. Therefore, the
achievable rate region in Theorem \ref{thm:rbcdf} always subsumes
the capacity region in Theorem \ref{thm:dmbc}. In other words,
NOMA with the proposed RBC-DF is always better than the simple
NOMA adopting the degraded BC as its component channel in
\cite{Saito&Kishiyama&Benjebbour&Nakamura&Li&Higuchi:13VTC}.

When node 1 can decode the data intended for node 2, RBC-DF always
performs better than RBC-AF since the correct message for node 2
is regenerated at node 1 and forwarded to node 2, but in RBC-AF
node 1 only forwards a noise-corrupted version of the message for
node 2 directly to node 2. Note the rate $R_2$ in \eqref{eq:LVeq2}
for RBC-DF is limited by the term $I(U;Y_1|X_1)$ resulting from
the requirement that node 2's message should be decoded
successfully at node 1 for DF operation. One way to circumvent
this full decoding requirement is the CF scheme in which node 2's
information is compressed at node 1 and forwarded to node 2
\cite{Cover&Gamal:79IT}. It is known that CF outperforms DF under
certain situations \cite{Kramer&Gastpar&Gupta:05IT}.  When full
decoding of node 2's data at node 1 is not possible or results in
a low rate, we can resort to RBC-CF. Furthermore, RBC-CF performs
better than RBC-AF since AF is worse than CF
\cite{Kramer&Gastpar&Gupta:05IT}. Thus, we consider RBC-CF
adopting CF at node 1 as our next choice for the component
channel.
 An achievable rate region of a RBC-CF with common information intended for both nodes 1 and 2 was derived in \cite{Bross:09IT}.
 However, the derivation and the encoding scheme are  complicated and do not provide much insight.
Hence, we here simplify the problem by eliminating  common
information and derive a simple achievable rate region of a RBC-CF
based on the recent encoding and compression technique of noisy
network coding (NNC) presented in \cite{Lim&Kim&Gamal&Chung:11IT}.
The NNC in the 3-node setup is a simplified CF scheme compared to
the original CF scheme proposed in \cite{Cover&Gamal:79IT}.
Although we use the coding technique in  NNC, there is a
fundamental difference between   NNC and RBC. In the NNC setup,
the transmitter has only one message which may be intended for
multiple receivers. In RBC, however, the transmitter  has two
messages: one for the relaying receiver and the other for the
second receiver. By extending the NNC scheme to RBC, we obtain the
following result regarding the achievable rate region of a RBC.

\vspace{0.5em}

\begin{theorem}\label{prop:rbccf}
The rate pair $(R_1,R_2)$ is achievable for the RBC \eqref{eq:RBCchanneldef} if
\begin{align}
R_1 &< I(U;Y_1) \label{eq:RBC_CF_Achievable2}\\
R_2 &< \min\{I(V;\hat{Y}_1,Y_2|X_1),I(V,X_1;Y_2)-I(\hat{Y}_1;Y_1|V,X_1,Y_2)\}\label{eq:RBC_CF_Achievable}
\end{align}
for some joint distribution
\begin{equation} \label{eq:RBCCF_genDist}
p(x_1)p(u)p(v)p(x_0|u,v)p(\hat{y}_1|y_1,x_1).
\end{equation}
\end{theorem}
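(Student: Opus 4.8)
The plan is to adapt the noisy-network-coding scheme of \cite{Lim&Kim&Gamal&Chung:11IT} to the two-message RBC setting, where the crucial modification is that node 0 must superimpose two independent messages—one for node 1 (carried by $U$) and one for node 2 (carried by $V$)—rather than a single message. I would use a block-Markov transmission over $L$ blocks with repetitive (message-repetition) encoding of the NNC type: the message pair $(m_1, m_2)$ is held fixed and retransmitted across all blocks, while node 1 compresses its observation $Y_1^{(l)}$ into $\hat{Y}_1^{(l)}$ using Wyner-Ziv–style binning and forwards an index via $X_1^{(l+1)}$ in the next block, without requiring node 1 to decode $m_2$.

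First I would specify the codebook generation consistent with the target distribution $p(x_1)p(u)p(v)p(x_0|u,v)p(\hat{y}_1|y_1,x_1)$: generate $X_1^n$ sequences, then $U^n$ sequences indexed by $m_1$ and $V^n$ sequences indexed by $m_2$ independently, form $X_0^n$ from $(U^n,V^n)$ through $p(x_0|u,v)$, and generate the compression codebook $\hat{Y}_1^n$ superimposed on $X_1^n$. Next I would describe encoding: the compression step at node 1 finds an index $l_l$ such that $(\hat{Y}_1^n(l_l|l_{l-1}), Y_1^n(l), X_1^n(l_{l-1}))$ are jointly typical, and transmits $X_1^n(l_l)$ in block $l+1$. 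The decoding at node 1 is the simplest part: since node 1 recovers only its own message $m_1$ carried by $U$ from $Y_1$, a standard joint-typicality decoding across the blocks yields the constraint $R_1 < I(U;Y_1)$, giving \eqref{eq:RBC_CF_Achievable2}. The key point I would emphasize is that because $U$ and $V$ are independent in the target distribution, node 1 can treat the $V$-component (and $X_1$) as noise when decoding $U$, and no decoding of $m_2$ is demanded of node 1—this is precisely the advantage over the DF constraint $I(U;Y_1|X_1)$ in Theorem \ref{thm:rbcdf}.

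The main obstacle, and the heart of the proof, is the decoding at node 2 and the resulting analysis of \eqref{eq:RBC_CF_Achievable}. Following the NNC philosophy, node 2 performs simultaneous (non-unique) decoding of $m_2$ and the compression indices over all $L$ blocks jointly, rather than successively decoding the compression index and then the message. I would carry out the error-event analysis by bounding the probability that an incorrect $\hat{m}_2$ together with some tuple of compression indices is jointly typical with node 2's received sequences $(Y_2, X_1)$ across blocks; collecting the dominant error terms and taking $L \to \infty$ to wash out the per-block rate loss yields the two competing bounds inside the minimum. The first term $I(V;\hat{Y}_1,Y_2|X_1)$ corresponds to node 2 exploiting both its direct observation $Y_2$ and the compressed relay description $\hat{Y}_1$ given the relay signal $X_1$; the second term $I(V,X_1;Y_2) - I(\hat{Y}_1;Y_1|V,X_1,Y_2)$ is the NNC-form bound capturing the net rate after subtracting the compression overhead, where the backward (Wyner-Ziv) conditioning makes the description rate flexible. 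The delicate bookkeeping will be ensuring that, in the multi-letter joint-typicality bound over $L$ blocks, the independence of $V$ from $(U,X_1)$ is used correctly so that $m_1$'s presence does not degrade node 2's bound, and that the standard NNC cut-set-like counting collapses to exactly these two mutual-information expressions; verifying that the error exponents vanish under these two inequalities (and invoking the packing and covering lemmas with the induced $L$-block distribution) completes the achievability argument.
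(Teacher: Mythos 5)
Your proposal follows essentially the same route as the paper's proof in Appendix A: NNC-style codebook generation under $p(x_1)p(u)p(v)p(x_0|u,v)p(\hat{y}_1|y_1,x_1)$, compression at node 1 via the covering lemma with the index forwarded directly in the next block, decoding of $U$ at node 1 giving $R_1<I(U;Y_1)$, and simultaneous non-unique decoding of $m_2$ together with the compression indices over all blocks at node 2, yielding exactly the two bounds in the minimum. The only cosmetic differences are that the paper transmits a fresh message $m_{1j}$ to node 1 in each block rather than repeating $m_1$ (both yield the same constraint), and your passing mention of Wyner--Ziv ``binning'' is inconsistent with the binning-free direct index transmission you actually describe (and which the paper's NNC-based scheme uses).
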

\begin{proof}
See Appendix \ref{subsec:appendA}.
\end{proof}

\vspace{0.5em}

Here, $U$ and $V$ are the input message variables to node 1 (the
relaying receiver) and node 2 (the second receiver), respectively,
and the overall transmit variable $X_0$ of node 0 is generated
based on $(U,V)$,  as seen in the term $p(x_0|u,v)$ in the
generating input distribution in \eqref{eq:RBCCF_genDist}. Thus,
the rate $R_1$ is simply the mutual information between the
message variable $U$ for node 1 and the received signal variable
$Y_1$ at node 1.  The rate $R_2$ is the rate of NNC with $V$ as
the transmit variable at node 0, where the cut-set bound is used
\cite{Lim&Kim&Gamal&Chung:11IT,Gamal&Kim:11NIT}.  The first term
$I(V;\hat{Y}_1,Y_2|X_1)$ in the RHS of
\eqref{eq:RBC_CF_Achievable} is the mutual information between
node 0 and nodes $\{1,2\}$ with self interference cancellation at
the cut group, nodes $\{1,2\}$.  The term
$I(V,X_1;Y_2)$\footnote{This term corresponds to the second term
$I(U,X_1;Y_2)$ in the RHS of \eqref{eq:LVeq2} in the RBC-DF
scheme.} in the second term in the RHS of
\eqref{eq:RBC_CF_Achievable} is the decoding rate of node 2 with
the help ($X_1$) from node 1 and the term
$I(\hat{Y}_1;Y_1|V,X_1,Y_2)$ in the second term in the RHS of
\eqref{eq:RBC_CF_Achievable} represents the loss related to
compression compared to full decoding. For the details of the
encoding and decoding scheme for the rate-tuple in Theorem
\ref{prop:rbccf}, see Appendix \ref{subsec:appendA}.

\subsection{The Gaussian Case}
\label{sec:GaussianCase}

In this section, we consider the Gaussian channel case and compare
the performance of the  three component channel formulation
schemes: GBC (simple NOMA), RBC-DF, and RBC-CF/NNC. In the
Gaussian channel case, the received signals at the relaying
receiver and the second receiver  are given by
\eqref{eq:yibdatamodel} and
 \eqref{eq:yjbdatamodel}, respectively, which are rewritten here as
\begin{align}
Y_1 &= h_{01}X_0 + Z_1,  \label{eq:GaussianChannel1}\\
Y_2 &= h_{02}X_0 + h_{12}X_1 + Z_2, \label{eq:GaussianChannel2}
\end{align}
where the resource block superscript $(b)$ is omitted. Here, $Y_1$
and $Y_2$ are the received signals at the relaying receiver and
the second receiver, respectively; $X_0$ and $X_1$ are the
transmit signals from the transmitter and the relaying receiver,
respectively; $h_{ij}$ denotes the channel from node $i$ to node
$j$; and $Z_i\sim\Cc\Nc(0,N_i)$ is the zero-mean additive Gaussian
noise at node $i$.

To compute the rate-tuples in Theorems \ref{thm:dmbc},
\ref{thm:rbcdf}, and \ref{prop:rbccf}, we need to specify the
associated  input distributions since the channel
$p(y_1,y_2|x_0,x_1)$ is given.  We set $X_0\sim\Cc\Nc(0,P_0)$ and
$X_1\sim\Cc\Nc(0,P_1)$, and set the transmitted signal at node 0
(i.e., the transmitter) as the superimposed signal given by
\begin{align}  \label{eq:GaussianSuperpose}
X_0= U + V,
\end{align}
where $U$ is the signal for node $1$ and $V$ is the signal for node
$2$:
\begin{equation} \label{eq:UVdistrGauss}
U \sim \Cc\Nc(0,\alpha P_0), ~~~   V \sim \Cc\Nc(0,
\bar{\alpha} P_0), ~~~\bar{\alpha}=1-\alpha,~~~ 0\le \alpha \le
1.
\end{equation}

It is known that a two-user SISO GBC is a degraded BC since either
$\frac{|h_{01}|^2}{N_1} \ge \frac{|h_{02}|^2}{N_2}$ or
$\frac{|h_{01}|^2}{N_1} < \frac{|h_{02}|^2}{N_2}$. With the
considered ordering in Section \ref{sec:systemmodel}, we have
$\frac{|h_{01}|^2}{N_1} \ge \frac{|h_{02}|^2}{N_2}$. Then, the
following NOMA condition is automatically satisfied:
\begin{align} \label{eq:noma_condition}
\frac{|h_{01}|^2\bar{\alpha}P_0}{|h_{01}|^2\alpha P_0 + N_1}\ge\frac{|h_{02}|^2\bar{\alpha}P_0}{|h_{02}|^2\alpha P_0 + N_2}.
\end{align}
The capacity region of GBC (simple NOMA) is given by
\begin{align}
R_1 &\le \log\left(1+\frac{|h_{01}|^2\alpha P_0}{N_1}\right),\label{eq:noma_1}\\
R_2 &\le \log\left(1+\frac{|h_{02}|^2\bar{\alpha}P_0}{|h_{02}|^2\alpha P_0 + N_2}\right).\label{eq:noma_2}
\end{align}
Next, consider the RBC-DF scheme. From Theorem \ref{thm:rbcdf}, the achievable rate region is given by
\begin{align}
R_1 &\le \log\left(1+\frac{|h_{01}|^2\alpha P_0}{N_1}\right),\label{eq:rbcdf_1}\\
R_2 &\le \min\left\{\log\left(1+\frac{|h_{02}|^2\bar{\alpha}P_0 + |h_{12}|^2P_1}{|h_{02}|^2\alpha P_0 + N_2}\right), ~\log\left(1+\frac{|h_{01}|^2\bar{\alpha}P_0}{|h_{01}|^2\alpha P_0 + N_1}\right)\right\}.\label{eq:rbcdf_2}
\end{align}
From the fact that the rates \eqref{eq:noma_1} and
\eqref{eq:rbcdf_1} for $R_1$  are the same and \eqref{eq:noma_2}
for $R_2$ is always smaller than or equal to \eqref{eq:rbcdf_2}
for $R_2$ by the condition \eqref{eq:noma_condition}, we can
easily see that the achievable rate region of RBC-DF subsumes the
capacity region of simple NOMA based on GBC. The improvement of
rate $R_2$ is large when $|h_{12}|^2P_1$ is large and the gap
between $\frac{|h_{01}|^2}{N_1}$ and $\frac{|h_{02}|^2}{N_2}$ is
large.

Now, consider RBF-CF/NNC in the Gaussian case.  Here we use
Theorem \ref{prop:rbccf} to derive an achievable rate region in
the Gaussian case. Note that the input distribution in this case
is given by $p(x_1)p(u)p(v)p(x_0|u,v)p(\hat{y}_1|y_1,x_1)$ in
\eqref{eq:RBCCF_genDist}. Thus, to apply Theorem \ref{prop:rbccf}
to the Gaussian channel case,  we further set the remaining part
$p(\hat{y}_1|y_1,x_1)$ of the input distribution as
\begin{align}   \label{eq:GaussianCFhatY}
\hat{Y}_1 &= Y_1 - h_{01}U + \hat{Z} = h_{01}V + Z_1 + \hat{Z},
\end{align}
where $\hat{Z}\sim\Cc\Nc(0,\hat{N})$.
With some calculation, we get the following achievable rate region of RBC-CF/NNC:
\begin{align}
R_1 \le &\log\left(1+\frac{|h_{01}|^2\alpha P_0}{|h_{01}|^2\bar{\alpha}P_0 + N_1}\right),\label{eq:rbccf_1}\\
R_2 \le &\min\left\{\log\left(1+\frac{(|h_{02}|^2\alpha P_0+N_2)|h_{01}|^2\bar{\alpha}P_0 + (N_1 +\hat{N})|h_{02}|^2\bar{\alpha}P_0}{(N_1+\hat{N})(|h_{02}|^2\alpha P_0+N_2)}\right),\right.\nonumber\\
        &\log\left(1+\frac{|h_{02}|^2\bar{\alpha}P_0+|h_{12}|^2P_1}{|h_{02}|^2{\alpha}P_0+N_2}\right) \nonumber\\
        &\left.-\log\left(1+\frac{N_1^2N_2+N_1^2|h_{02}|^2\alpha P_0}{\hat{N}N_1N_2+\hat{N}N_2|h_{01}|^2\alpha P_0+\hat{N}N_1|h_{02}|^2\alpha P_0+N_1N_2|h_{01}|^2\alpha P_0}\right)\right\}  \label{eq:rbccf_rateR2}
\end{align}
(The detail of the calculation is in Appendix
\ref{subsec:appendB}.) The rate $R_2$ of the second receiver in
\eqref{eq:rbccf_rateR2} can be larger than that of RBC-DF in
\eqref{eq:rbcdf_2} depending on the situation. However, note  that
the rate $R_1$ of the relaying receiver in \eqref{eq:rbccf_1} is
smaller than that of GBC and RBC-DF. This is because at the
relaying receiver the message for the second receiver is not fully
decoded and thus the interference from the second receiver's
signal at the relaying receiver cannot be cancelled by SIC. To
resolve this problem, we apply DPC
\cite{Costa:83IT} at the transmitter together with the encoding
scheme presented in Theorem \ref{prop:rbccf}
 to remove the
interference from the second receiver's signal at the relaying
receiver since the transmitter knows both messages
\cite{Gelfand&Pinsker:80PCIT}. In this case, the transmitter
generates the message codeword for the second receiver first and
then based on this message codeword it generates the message
codeword for the relaying receiver based on DPC. Then, the
transmitter superimposes the two message codewords and transmits
the superimposed signal. The processing at the relaying receiver
and the second receiver is the same as RBC-CF/NNC. In the decoding
process of the relaying receiver for its own message, the
interference from the second receiver's signal is automatically
removed due to DPC applied at the transmitter side. The achievable
rate region of RBC-CF/NNC employing DPC is given by
\begin{align}
R_1 \le &\log\left(1+\frac{|h_{01}|^2\alpha P_0}{N_1}\right) \label{eq:rbccf_rateR1dpc}\\
R_2 \le &\min\left\{\log\left(1+\frac{(|h_{02}|^2\alpha P_0+N_2)|h_{01}|^2\bar{\alpha}P_0 + (N_1 +\hat{N})|h_{02}|^2\bar{\alpha}P_0}{(N_1+\hat{N})(|h_{02}|^2\alpha P_0+N_2)}\right),\right.\nonumber\\
        &\log\left(1+\frac{|h_{02}|^2\bar{\alpha}P_0+|h_{12}|^2P_1}{|h_{02}|^2{\alpha}P_0+N_2}\right) \nonumber\\
        &\left.-\log\left(1+\frac{N_1^2N_2+N_1^2|h_{02}|^2\alpha P_0}{\hat{N}N_1N_2+\hat{N}N_2|h_{01}|^2\alpha P_0+\hat{N}N_1|h_{02}|^2\alpha P_0+N_1N_2|h_{01}|^2\alpha
        P_0}\right)\right\}.\label{eq:rbccf_rateR2dpc}
\end{align}
Note that in this scheme $R_2$ is the same as
\eqref{eq:rbccf_rateR2} of RBC-CF/NNC but $R_1$ is improved to be
the same as \eqref{eq:rbcdf_1} of GBC and RBC-DF. The value of
$\hat{N}$ can be optimized to yield maximum $R_2$ in
\eqref{eq:rbccf_rateR2} and \eqref{eq:rbccf_rateR2dpc} by solving
a quadratic equation.  The proposed encoding scheme based on both
superposition/DPC and CF/NNC for NOMA is described in Fig.
\ref{fig:codeMarkov_Chain} in Appendix \ref{subsec:appendA}.

\section{The Considered User Scheduling and Pairing}
\label{sec:pairing}

In Section \ref{sec:componentchannel}, we have investigated the
achievable regions for several component channel types. In this
section, we introduce two user pairing methods  to compare the
performance of the overall system adopting one of the considered
component channel types: GBC (simple NOMA), RBC-DF or RBC-CF as
the component channel. Since the performance of the overall system
depends on user pairing, we consider two disparate user pairing
methods: near-far pairing and nearest neighbor pairing. The two
pairing methods are opposite to each other and are useful to
compare NOMA employing a different component channel type in
different system setting.

\subsection{Near-Far Pairing}
\label{subsec:grouping}

The first considered user scheduling and pairing is similar to
that in
\cite{Benjebbour&Li&Saito&Kishiyama&Harada&Nakamura:13Globecom}
except that we consider a sequential approach. In the first
method, we aim at pairing two users: one with good channel and the
other with bad channel. We assume that the power for the relaying
receiver and the power for the second receiver for each resource
block are fixed, i.e., the parameter $\alpha$ in
\eqref{eq:UVdistrGauss}  is given, and the BS knows the location
of each user in the cell and the gain of the channel from the BS
itself to each user in the cell, i.e., $h_{0k}^{(b)}$ for
$k=1,2,\cdots,K$ and $b=1,2,\cdots,B$. First, the users in the
cell are divided into two groups for each resource block $b$:
group $G_1^{(b)}$ with good channel with $K/2$ users and group
$G_2^{(b)}$  with bad channel with $K/2$ users by ordering
$|h_{0k}^{(b)}|$ for each resource block $b$. Then, for resource
block $b=1$, we pick one user (which becomes the relaying
receiver) from $G_1^{(1)}$  based on the proportionally fair (PF)
scheduling \cite{Viswanath&Tse&Laroia:02IT} and the instantaneous
achievable rate $R_1$ given in Section \ref{sec:GaussianCase} for
RBC-DF, RBC-CF/NNC, or RBC-CF/NNC/DPC. That is, the selected user
is given by
\begin{align}
\kappa_1^{(1)} &= \mathop{\arg\max}\limits_{i\in G_1^{(1)}}   \frac{R_{1(i)}^{(b)}[t]}{\bar{\Rc}(i)[t]},
\end{align}
where $R_{1(i)}^{(b)}[t]$ is the rate $R_1$ given in Section \ref{sec:GaussianCase} when user $i$ serves as the relaying receiver at time $t$ and resource block $b$, and $\bar{\Rc}(i)[t]$ is the average served rate for user $i$ up to time $t$. Note from Section \ref{sec:GaussianCase} that $R_{1(i)}^{(b)}[t]$ can be computed based only on $h_{0i}^{(b)}$.  After $\kappa_1^{(1)}$ is chosen, we select the second user $\kappa_2^{(1)}$ for resource block $b=1$ from $G_2^{(1)}$ based on $\kappa_1^{(1)}$  and again the PF principle, i.e.,
\begin{align}
\kappa_2^{(1)} &= \mathop{\arg\max}\limits_{i\in G_2^{(1)}} \frac{R_{2(i|\kappa_1^{(1)} )}^{(b)}[t]}{\bar{\Rc}(i)[t]},
\end{align}
where $R_{2(i|j)}^{(b)}[t]$ is the rate $R_2$ given in
Section \ref{sec:GaussianCase} when user $i$ is the second
receiver paired with the relaying receiver $j$ at time $t$ and
resource block $b$. Here, as seen in Section
\ref{sec:GaussianCase},  the computation of
$R_{2(i|\kappa_1^{(1)} )}^{(b)}[t]$ requires the knowledge
of the channel gain $|h_{\kappa_1^{(1)}i}^{(b)}|^2$ from user
$\kappa_1^{(1)}$ and user $i$.  In this step, we use an estimate
for the channel gain based on  \cite{Holma&Toskala:book}
\begin{equation}
\widehat{|h_{ij}^{(b)}|^2} = C_0 d^{-\gamma},
\end{equation}
where $C_0$ is a constant, $d$ is the distance between users $i$
and $j$, and $\gamma$ is the path loss exponent. (The assumption
of knowledge of user locations at the BS is required for this
step.) After $\kappa_1^{(1)}$ and $\kappa_2^{(1)}$ for resource
block $b=1$ are selected, we proceed to $b=2$. For resource block
$b=2$, we remove $\kappa_1^{(1)}$ and $\kappa_2^{(1)}$ from
$G_1^{(2)}$ and  $G_2^{(2)}$, and repeat the same procedure with
the remaining sets. After users are selected for all resource
blocks, we update the average served rate for the served users as
\begin{align}   \label{eq:PFforgetting}
\bar{\Rc}(i)[t+1] := (1-\tau)\bar{\Rc}(i)[t] + \tau R(i)[t], ~~~i=1,\cdots,K,
\end{align}
where $R(i)[t]$ is the served rate for user $i$ at time $t$, and
$\tau$ is the auto-regressive (AR) filter coefficient or
forgetting factor.

\subsection{Nearest-Neighbor Pairing}
\label{subsec:pairing}

The second scheduling and pairing is quite opposite to the first
method. In the second method, we aim at pairing two users who are
close to each other. The reason of considering the second pairing
method is to investigate the performance of general NOMA over a
wide range of user pairing methods. In the second method, we
select one user as the relaying receiver and its nearest neighbor
as the second receiver. Since the nearest neighbor for each user
is given, we can select the two users simultaneously based on the
PF metric. That is, for resource block $b=1$, set
$\Gc=\{1,2,\cdots,K\}$ and
\begin{align}
\kappa_1^{(1)} &= \mathop{\arg\max}\limits_{i \in \Gc}\left(\frac{R_{1(i)}^{(b)}[t]}{\bar{\Rc}(i)[t]}+\frac{R_{2(\Nc(i)|i)}^{(b)}[t]}{\bar{\Rc}(\Nc(i))[t]}\right)
\end{align}
where $\Nc(i)$ is the index of the nearest neighbor of user $i$.
When user selection for resource block $b=1$ is finished, we
remove $\kappa_1^{(1)}$ and $\kappa_2^{(1)}=\Nc(\kappa_1^{(1)})$
from set $\Gc$, and repeat the same procedure for resource blocks
$b=2,\cdots,B$.

\section{Numerical Results}
\label{sec:numericalresults}

In this section, we provide some numerical results to evaluate the
performance of the proposed NOMA with RBC. We first evaluate the
performance of each component channel presented in Section
\ref{sec:GaussianCase} and then evaluate the  sum rate of the
entire cell employing the considered user pairing and scheduling
presented in Section \ref{sec:pairing} and the considered RBC
component channel.

\subsection{The Component Channel Performance}
\label{subsec:nr_component_channel}

\begin{figure}[!hbtp]
\centerline{ \SetLabels
\L(0.25*-0.1) \small{(a)} \\
\L(0.76*-0.11) \small{(b)} \\
\endSetLabels
\leavevmode
\strut\AffixLabels{
\scalefig{0.45}\epsfbox{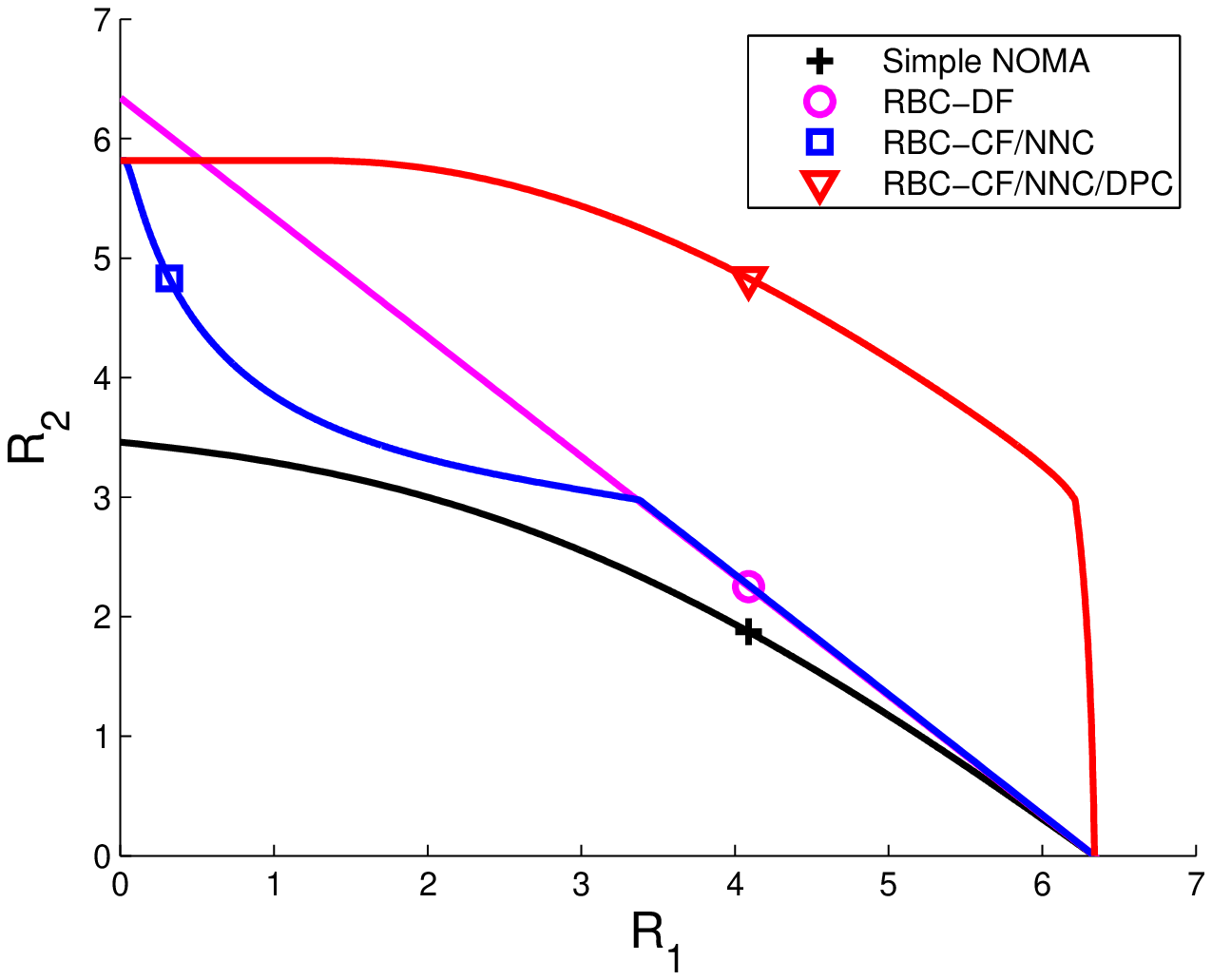}
\scalefig{0.45}\epsfbox{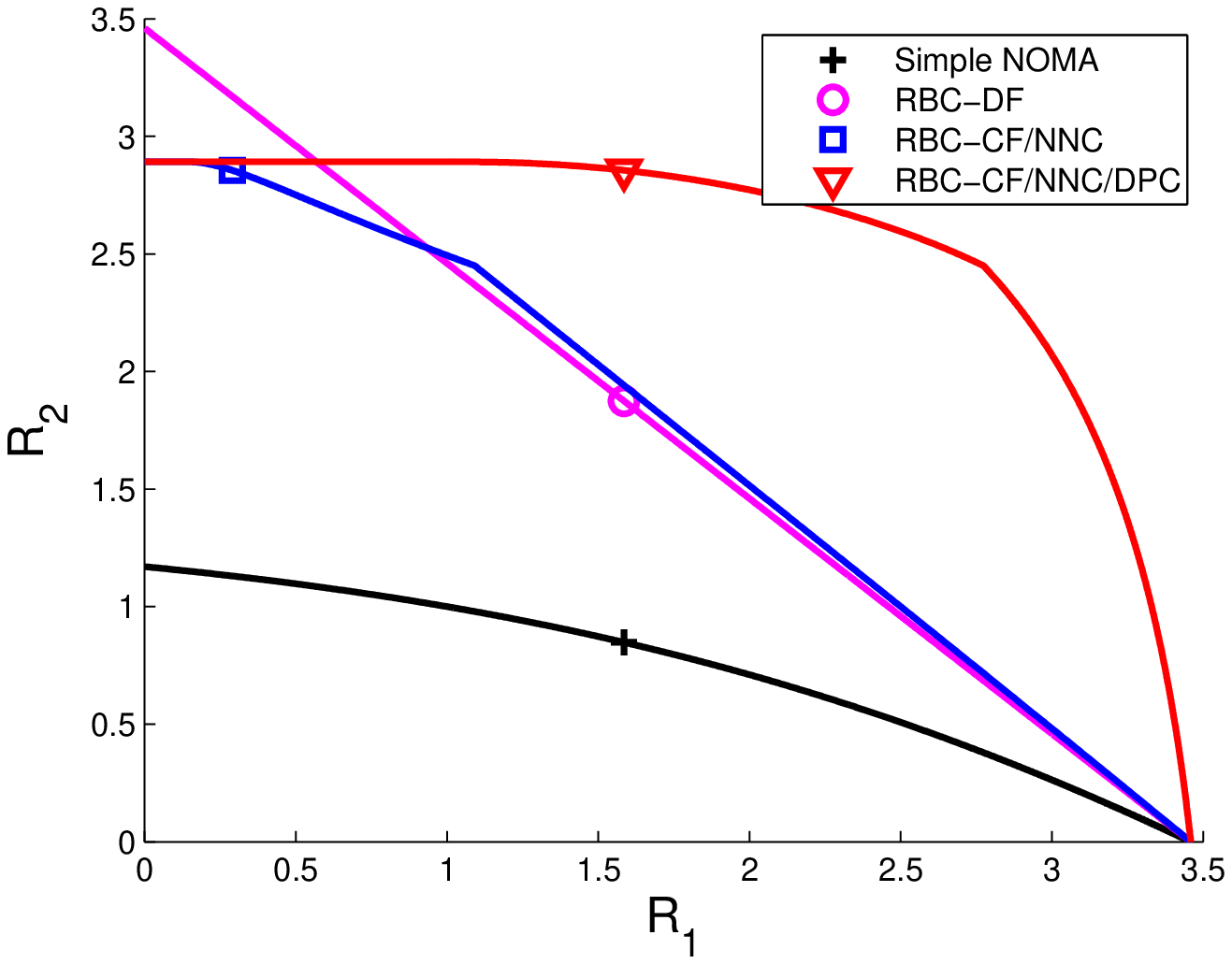} } }
\vspace{0.5cm} \centerline{ \SetLabels
\L(0.25*-0.1) \small{(c)} \\
\L(0.76*-0.1) \small{(d)} \\
\endSetLabels
\leavevmode
\strut\AffixLabels{
\scalefig{0.45}\epsfbox{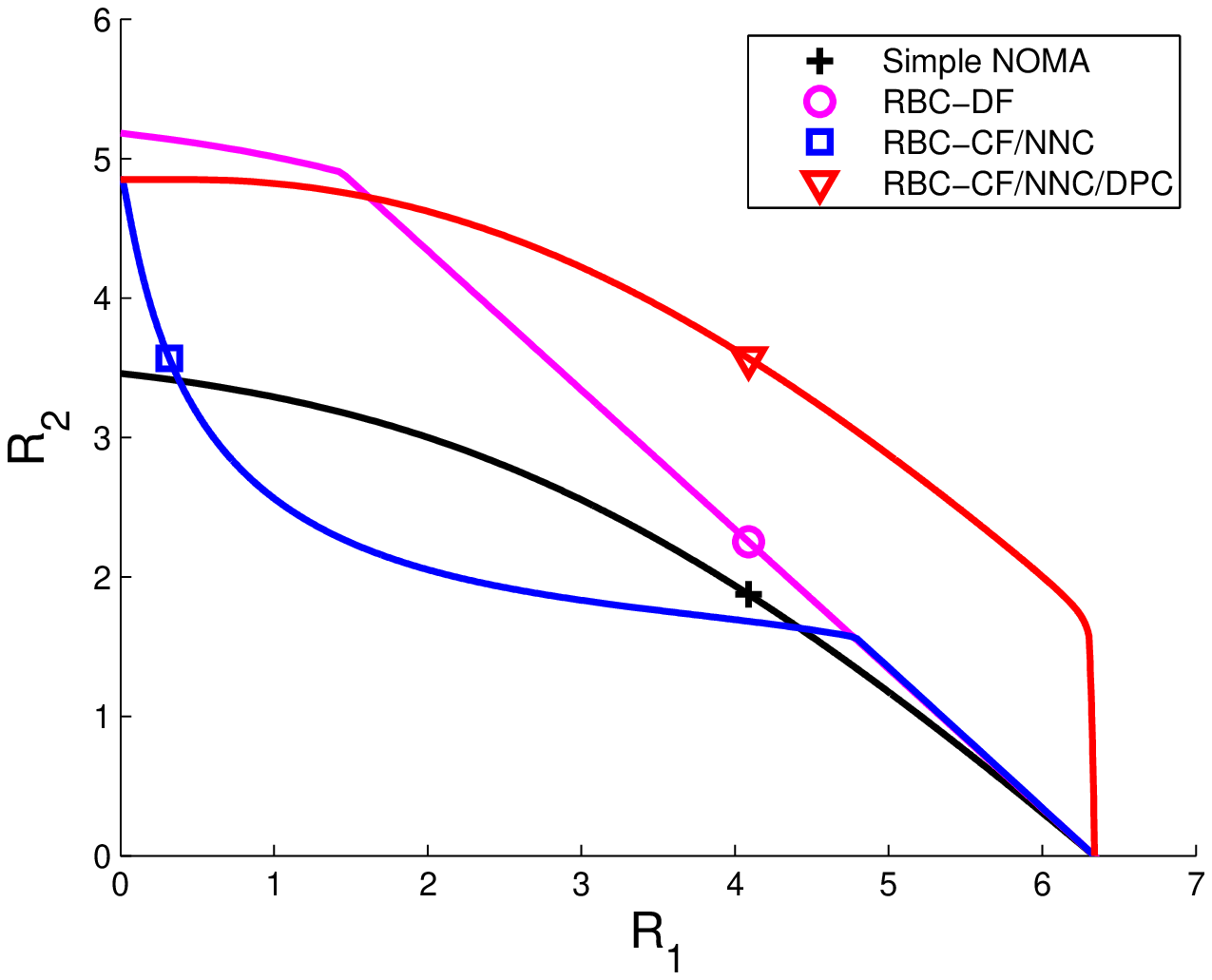}
\scalefig{0.45}\epsfbox{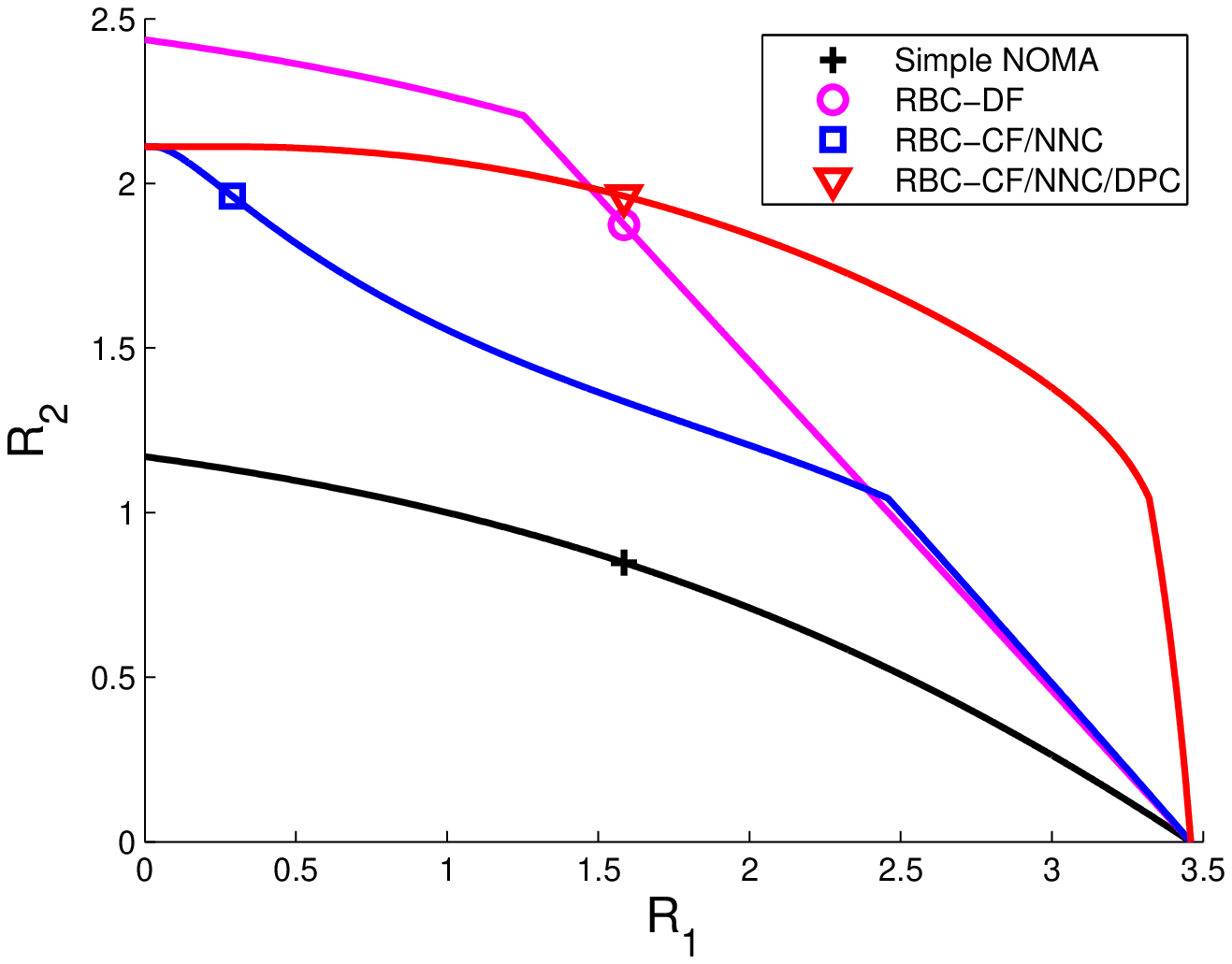} } }
\vspace{0.5cm} \centerline{ \SetLabels
\L(0.25*-0.1) \small{(e)} \\
\L(0.76*-0.1) \small{(f)} \\
\endSetLabels
\leavevmode
\strut\AffixLabels{
\scalefig{0.45}\epsfbox{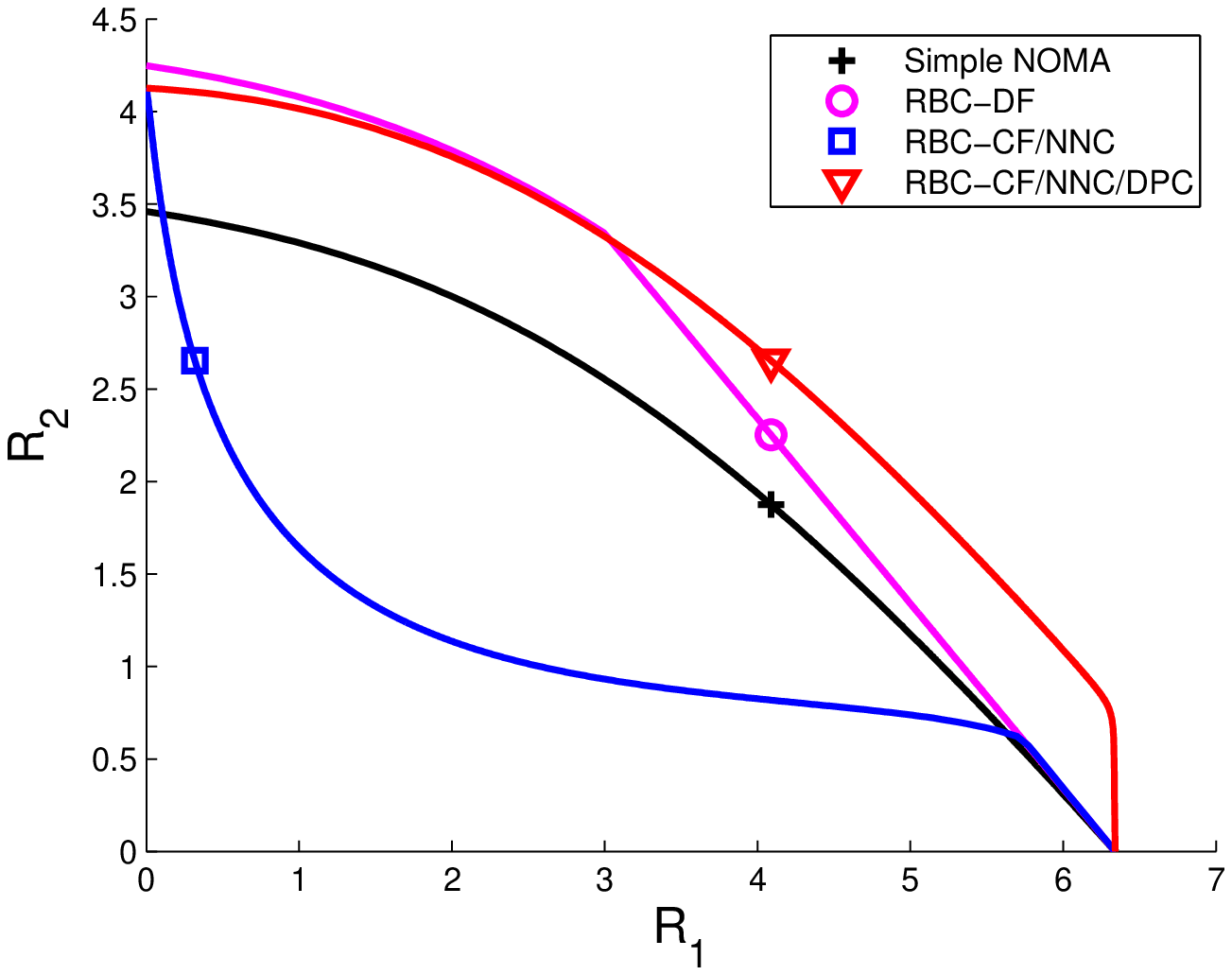}
\scalefig{0.45}\epsfbox{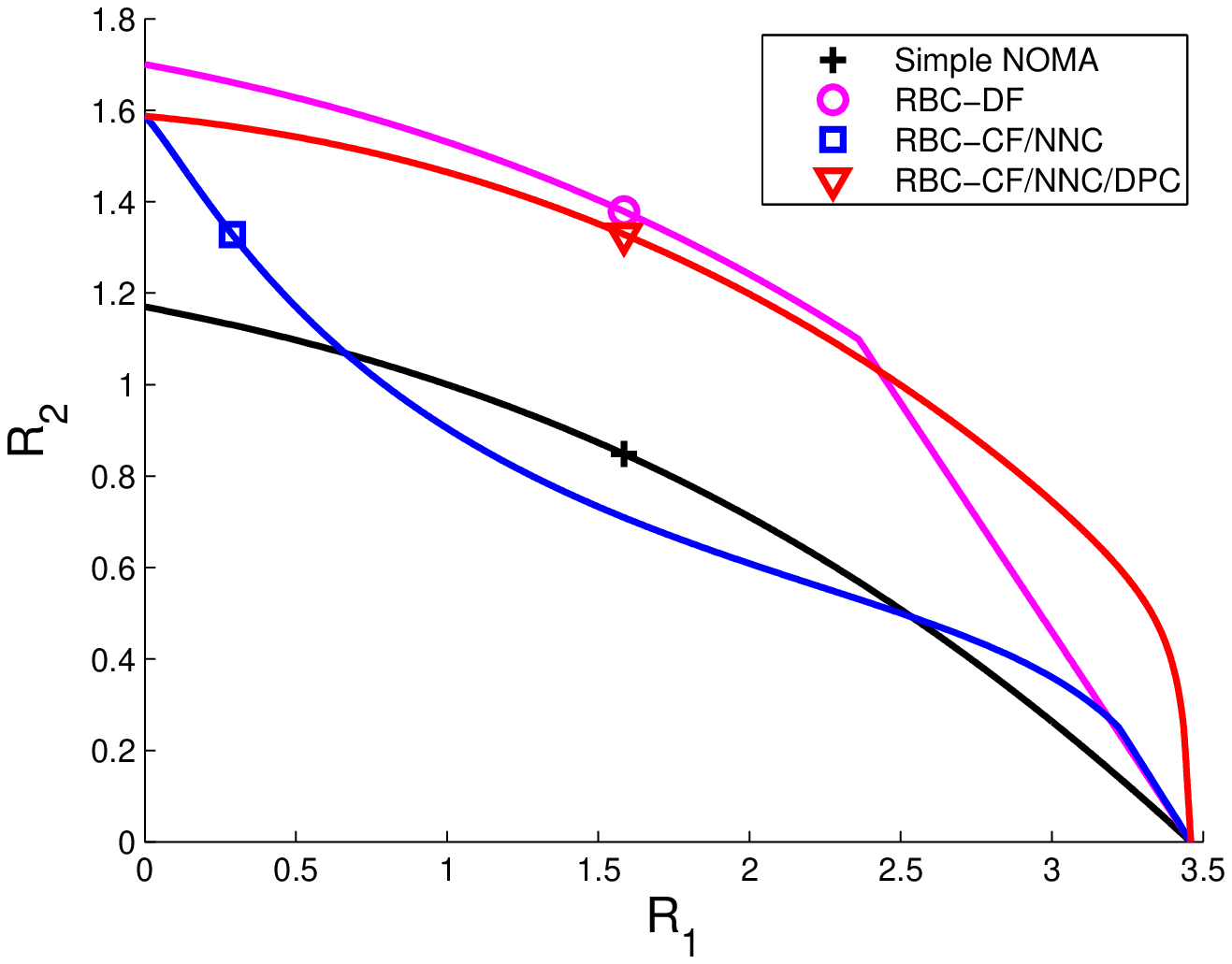} } }
\vspace{0.5cm} \caption{The achievable rate region -
$(|h_{01}|^2,|h_{02}|^2,|h_{12}|^2)=(8,8,1):(a)$, $(c)$ and $(e)$,
 $(|h_{01}|^2,|h_{02}|^2,|h_{12}|^2)=(1,1,1/8):(b)$, $(d)$ and
$(f)$.  $N=N_1=N_2=1$.    $(P_0^{(b)}/N,
P_1^{(b)}/N)=(10\mbox{dB},10\mbox{dB})$: (a) and (b),
$(P_0^{(b)}/N, P_1^{(b)}/N)=(10\mbox{dB},5\mbox{dB})$: (c) and
(d), $(P_0^{(b)}/N, P_1^{(b)}/N)=(10\mbox{dB},0\mbox{dB})$: (e)
and (f) } \label{fig:rateRegion}
\end{figure}

For the evaluation of the performance of component channels, we
considered  a linear configuration in which
 the location of the relaying receiver in the middle of the
line between the transmitter and the second receiver.  We set
$N_1=N_2=N=1$ and considered three pairs of  $(P_0^{(b)},
P_1^{(b)})$  for the transmit power $P_0^{(b)}$ at the BS and the
transmit power $P_1^{(b)}$ at the relaying receiver:
$(P_0^{(b)}/N, P_1^{(b)}/N)$ = (10 dB, 10 dB), (10 dB, 5dB), and
(10 dB, 0 dB).\footnote{In real-world cellular systems, the
maximum BS downlink average transmit power is  43 dBm (20W) and
the maximum average transmit power of a cellular phone is 24 dBm
(0.25W). However, the BS downlink transmit power is shared by 50
to 100 simultaneous users. Hence, the maximum per-user BS downlink
average power is around 23 dBm to 26 dBm. This is the basis for
the consider relative magnitude for
 $P_0^{(b)}$ and $P_1^{(b)}$.}
We assumed that the path loss exponent is $\gamma=3$. Based on
$\gamma=3$, we considered two channel gain setup:
(i)\footnote{With the channel gain setting (i), we have node 1's
received signal-to-noise ratio (SNR) $|h_{01}|^2\alpha
P_0^{(b)}/N_1=12$dB and node 2's received SNR
$|h_{02}|^2(1-\alpha) P_0^{(b)}/N_2=9$dB for $\alpha=0.2$, a
typical power distribution value in NOMA \cite{Benjebbour&Li&Saito&Kishiyama&Harada&Nakamura:13Globecom}. Node
2's SNR of 9dB is higher than the signal-to-interference ratio
(SIR) of 0.8/0.2=6dB. With the channel gain setting (ii), each
node's SNR is reduced by 9dB.} $|h_{01}|^2=|h_{12}|^2=8$ and
$|h_{02}|^2=1$ and (ii) $|h_{01}|^2=|h_{12}|^2=1$ and
$|h_{02}|^2=1/8$. Then, we swept the value of the parameter
$\alpha$ defined in \eqref{eq:UVdistrGauss} to determine the
achievable rate pair ($R_1,R_2$). The result is shown in Fig.
\ref{fig:rateRegion}. Fig.\ref{fig:rateRegion}(a), (c) and (e)
show the rate-tuples in [\eqref{eq:noma_1}, \eqref{eq:noma_2}: GBC
- simple NOMA], [\eqref{eq:rbcdf_1}, \eqref{eq:rbcdf_2}: RBC-DF],
[\eqref{eq:rbccf_1}, \eqref{eq:rbccf_rateR2}: RBC-CF/NNC], and
[\eqref{eq:rbccf_rateR1dpc}, \eqref{eq:rbccf_rateR2dpc}:
RBC-CF/NNC/DPC] for the channel gain setting (i) of around 10 dB
received SNR operation. It is seen that the proposed NOMA
 equipped with RBC component channels employing superposition/DPC and CF/NNC
significantly improves the performance over the simple NOMA based
on GBC/SIC. The marked points in Fig. \ref{fig:rateRegion} are the
rate-pair points of $\alpha=0.2$. It is seen that for
$\alpha=0.2$, $R_2$ of RBC-CF/NNC without DPC is higher than $R_2$
of RBC-DF but $R_1$ of RBC-CF/NNC without DPC is much lower  than
$R_2$ of RBC-DF, as expected.  It is also seen that in the channel
gain setting (i) of roughly 10 dB received SNR operation, the gain
of RBC-DF over GBC is not so large at $\alpha=0.2$.
Fig.\ref{fig:rateRegion}(b), (d) and (f) show the rate-tuples in
the channel gain setting (ii) of 0dB received SNR operation. It is
seen that the gain by the RBC-CF/NNC/DPC over the simple NOMA
(GBC) is drastic.

\subsection{The Overall System Performance}
\label{subsec:nr_downlink_system}

Here, we provide numerical results to evaluate the overall system performance of NOMA with each of the proposed component channels based on the considered user scheduling and pairing method in Section \ref{sec:pairing} in  a single-cell downlink network with the cell topology described in Fig. \ref{fig:cellStruc}.  The sector radius from the BS to the cell edge was set to be $D_e=500$ m. We considered $B=4$ resource blocks and $K=40$ users  uniformly distributed over the $120^o$ sector from radius 50 m to the cell edge.  The noise power for each user was the same and set to be $N=N_1=N_2=\cdots=N_K=1$. The channel gain $h_{0k}^{(b)}$ from the BS to user $k$ at the resource block $b$ was modelled as the product of a Rayleigh fading factor $f_{0k}^{(b)} \stackrel{i.i.d.}{\sim} \Cc\Nc(0,1)$ and the path loss, given by
\begin{equation}
h_{0k}^{(b)} = f_{0k}^{(b)} \cdot\left(\frac{d_{0k}}{D_e}\right)^{-\gamma},
\end{equation}
where $d_{0k}$ was the distance from the BS to user $k$ and the path loss factor was $\gamma=3$.
The BS transmit power $P_0^{(b)}$ was set so that the expected received SNR at the cell edge was 10 dB, i.e.,
\[
10 \mbox{dB} = \frac{\Ebb\{|h_{0k}^{(b)}|^2\}P_0^{(b)}}{N} = \frac{\Ebb\{|f_{0k}^{(b)}|^2\} \left( \frac{D_e}{D_e} \right)^{-3} P_0^{(b)}}{N} = \frac{P_0^{(b)}}{N} ~~\forall~b=1,\cdots,B.
\]
Thus, users with $d_{0k} < D_e$ had expected SNR larger than 10 dB.  The transmit power $P_1^{(b)}$ of the relaying receiver was set relative to $P_0^{(b)}$. For one realization of user locations, we ran the user scheduling and pairing method in Section \ref{sec:pairing} with the PF forgetting factor  $\tau=0.01$ in \eqref{eq:PFforgetting} for 1000 scheduling intervals, and computed the sum rate divided by 1000 for each scheme. We averaged the sum rate over  50 independent realizations for user locations.   Fig. \ref{fig:total_throughput} shows the sum rate result for NOMA equipped with four different component channels: GBC (simple NOMA), RBC-DF, RBC-CF/NNC, and RBC-CF/NNC/DPC. For the solid lines the near-far paring was used and for the dashed lines  the nearest neighbor pairing was used.
\begin{figure}[tp]
    \centerline{
\scalefig{0.7}
\epsfbox{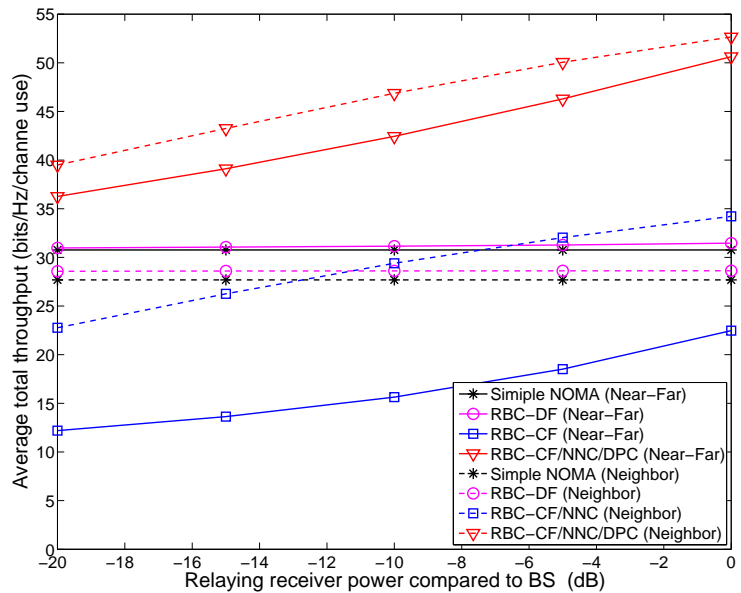}}
    \caption{Total system sum rate: solid line - the near-far paring and dashed line - the nearest neighbor pairing}
    \label{fig:total_throughput}
\end{figure}
It is seen that the gain by RBC-DF is marginal in this operating SNR range with the cell-edge user SNR of 10 dB, as expected from Section \ref{subsec:nr_component_channel}. It is seen that the gain of RBC-CF/NNC/DPC over the simple NOMA is significant when $P_1^{(b)}$ is comparable to $P_0^{(b)}$, as expected from Section \ref{subsec:nr_component_channel}. If the operating SNR is decreased. then the gain of NOMA based on RBC will increase further, as expected from Fig. \ref{fig:rateRegion}(b), (d), and (f). Note that the performance difference due to the two disparate user pairing methods is not so significant for GBC (simple NOMA) and RBC-DF.

\section{Conclusion}
\label{sec:conclusion}

In this paper, we have considered enhancing NOMA by using RBC
component channels in  SISO cellular downlink systems. We have
newly derived an achievable rate region of a RBC with CF/NNC and
have investigated the achievable rate region of a RBC with DF,
CF/NNC, and CF/NNC plus DPC. Based on the achievable rate
analysis, we have investigated the overall system performance of
NOMA equipped with RBC component channels, and have shown that
NOMA with  RBC-DF yields marginal gain and NOMA with
RBC-CF/NNC/DPC yields drastic gain over the simple NOMA based on
GBC in a practical system setup. The gist of the gain of NOMA lies
in non-linear processing to cope with system overloading. By going
beyond simple GBC/SIC to advanced multi-terminal encoding
including DPC and CF/NNC, we can obtain far larger gains.
Currently, active research is going on to implement practical DPC
and CF codes already with some available codes
\cite{Erez&Brink:05IT,Sun&Yang&Liveris&Stankovic&Xiong:09IT,Bennatan&Burshtein&Caire&Shamai:06IT,
Serrano&Thobaben&Rathi&Skoglund:10Asilomar,Serrano:10Phd,Serrano&Thobaben&Andersson&Rathi&Skoglund:12TCOM,Karzand:12Zurich,Karas&Pappi&Karagiannidis:15ComLetter}.
With reflecting the gain in NOMA by using such multi-terminal
encoding, it is worth considering such advanced multi-terminal
encoding for NOMA.

\appendices
\section{Proof of Theorem \ref{prop:rbccf}}
\label{subsec:appendA}

\textit{Codebook Generation:} Fix
$p(x_1)p(u)p(v)p(x_0|u,v)p(\hat{y}_1|y_1,x_1)$. We assume
blockwise\footnote{The term 'block' in the appendix is not the
resource block in the main content of the paper. A  block in
this appendix is a concatenation of $n$ channel code symbols.}
transmission with $n$ code symbols as one block, and transmit $J$
blocks. We randomly and independently generate a codebook for each
block. For each block $j\in[1:J]\defeq\{1,2,\cdots,J\}$,
\begin{itemize}
\item randomly and independently generate $2^{n\hat{R}_2}$ sequences $\xbf_{1j}(l_{j-1})$, $l_{j-1}\in[1:2^{n\hat{R}_2}]$, each according to the distribution $\prod_{k=1}^n p_{X_1}(x_{1,(j-1)n+k})$;
\item  randomly and independently generate $2^{nR_1}$ sequences $\ubf_{j}(m_{1j})$, $m_{1j}\in[1:2^{nR_1}]$, each according to $\prod_{i=1}^n p_{U}(u_{(j-1)n+i})$;
\item randomly and independently generate $2^{nJR_2}$ sequences $\vbf_{1j}(m_2)$, $m_2\in[1:2^{nJR_2}]$, each according to the distribution $\prod_{k=1}^n p_{V}(v_{(j-1)n+k})$;
\item for each $\ubf_j(m_{1j})$ and $\vbf_{j}(m_2)$, randomly generate a sequence $\xbf_{0j}(m_{1j},m_2)$ each according to $\prod_{i=1}^n p_{X|U,V}(x_{0,(j-1)n+i}|u_{(j-1)n+i}(m_{1j}),v_{(j-1)n+i}(m_2))$; and
\item for each $\xbf_{1j}(l_{j-1})$, randomly and conditionally independently generate $2^{n\hat{R}_2}$ sequences $\hat{\ybf}_{1j}(l_j|l_{j-1})$, $l_j\in[1:2^{n\hat{R}_2}]$, each according to $\prod_{i=1}^n p_{\hat{Y}_1|X_1}(\hat{y}_{1,(j-1)n+i}|x_{1,(j-1)n+i}(l_{j-1}))$.
\end{itemize}
Then, the codebook is shared for all nodes. The Markov chain relationship between the codewords ($\xbf_{1j}$, $\ubf_j$, $\vbf_j$, $\xbf_{0j}$, and $\hat{\ybf}_{1j}$) and the received signal vectors ($\ybf_{1j}$ and $\ybf_{2j}$) is described in Fig. \ref{fig:codeMarkov_Chain}.

\textit{Encoding:} Let $m_{1j}$ and $m_2$ be the messages to be sent, and choose $l_0=1$ by convention. The transmitter sends $\xbf_{0j}(m_{1j},m_2)$ generated from $\ubf_j(m_{1j})$ and $\vbf_{j}(m_2)$.

Upon reception of $\ybf_{1j}$, the relaying receiver finds an index $l_j$ such that
\begin{equation}
(\hat{\ybf}_{1j}(l_j|l_{j-1}),\ybf_{1j},\xbf_{1j}(l_{j-1}))\in\mathcal{T}_{\epsilon_1}^{(n)}(\hat{Y}_1,Y_1,X_1),
\end{equation}
where $\mathcal{T}_{\epsilon_1}^{(n)}(\hat{Y}_1,Y_1,X_1)$ is the set of $\epsilon_1-$jointly typical sequences.
If there are more than one such index, choose one of them arbitrarily. If there is no such index, choose an arbitrary index. By the covering lemma \cite{Gamal&Kim:11NIT}, if $\hat{R}_2>I(\hat{Y}_1;Y_1|X_1)+\delta_1(\epsilon_1)$, the probability that there exists at least one such index tends to $1$ as $n\rightarrow\infty$, where $\epsilon_1>0$ and $\delta_1(\cdot)$ is a positive function such that $\delta_1(\epsilon_1)\rightarrow0$ as $\epsilon_1\rightarrow0$.  After determining $l_j$, the relaying receiver transmits $\xbf_{1,j+1}(l_j)$ at the next block $j+1$.

\textit{Decoding at the Relaying Receiver:} At the end of each block $j$, the relaying receiver finds the unique message $\hat{m}_{1j}\in[1:2^{nR_1}]$ such that
\begin{equation}
(\ubf_{j}(\hat{m}_{1j}),\ybf_{1j})\in\mathcal{T}_{\epsilon_2}^{(n)}(U,Y_1),
\end{equation}
where $\epsilon_2>\epsilon_1$. If there are no or more than one such messages,  declare  error.

\textit{Decoding at the Second Receiver:} At the end of the whole transmission of $J$ blocks, the second receiver finds the unique message $\hat{m}_2\in[1:2^{nJR_2}]$ such that
\begin{equation}
(\vbf_{j}(\hat{m}_2),\xbf_{1j}(\hat{l}_{j-1}),\hat{\ybf}_{1j}(\hat{l}_j|\hat{l}_{j-1}),\ybf_{2j})\in\mathcal{T}_{\epsilon_3}^{(n)}(V,X_1,\hat{Y}_1,
Y_2)
\end{equation}
for all $j\in[1:J]$ for some $\hat{l}_1,\hat{l}_2,\ldots,\hat{l}_J$, where $\epsilon_3>\epsilon_1$. If there are no or more than one such messages,  declare  error.

\begin{figure*}[ht]
\centerline{
\leavevmode
\strut\AffixLabels{
\psfrag{n0}[h][h]{\small Transmitter$~~$} %
\psfrag{n11}[h][h]{\small Relaying$~~$} %
\psfrag{n12}[h][h]{\small Receiver$~~$} %
\psfrag{n21}[h][h]{\small Second$~~$} %
\psfrag{n22}[h][h]{\small Receiver$~~$} %
\psfrag{x1}[h][h]{\small $\xbf_{1j}$} %
\psfrag{u}[h][h]{\small $\ubf_j$} %
\psfrag{v}[h][h]{\small $\vbf_j$} %
\psfrag{x0}[h][h]{\small $\xbf_{0j}$} %
\psfrag{yh}[h][h]{\small $\widehat{\ybf}_{1j}$} %
\psfrag{y1}[h][h]{\small $\ybf_{1j}$} %
\psfrag{y2}[h][h]{\small $\ybf_{2j}$} %
\psfrag{p}[l]{\small possible self-interference} %
\psfrag{sup}[c]{\small \tcr{Superposition/DPC}} %
\psfrag{cf}[l]{\small \tcb{CF/NNC}} %
\scalefig{0.5}\epsfbox{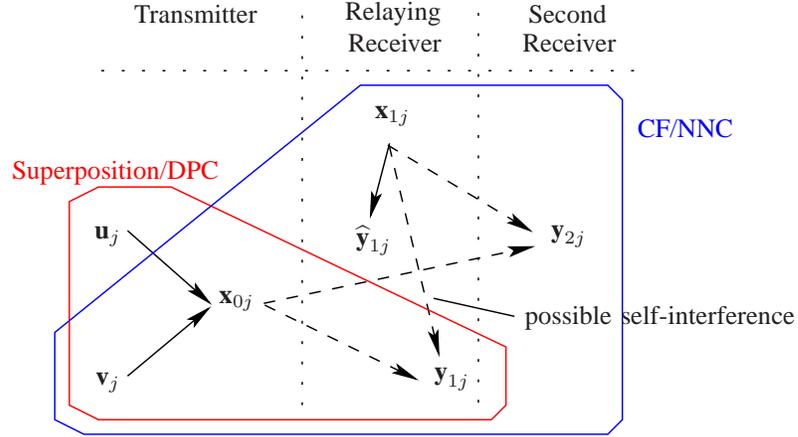} } }
\caption{Markov chain relationship between codewords (solid arrows: codeword Markov chain and dashed arrows: channel links) } \label{fig:codeMarkov_Chain}
\end{figure*}

\textit{Analysis of the Error Probability:} Without loss of generality, we assume that truly transmitted message indices are $M_{11}=\cdots=M_{1J}=M_2=1$ and $L_1=\cdots=L_J=1$. Then, decoding error occurs  only if one or more of the following events occur:
\begin{itemize}
\item $\Ec_1 := \{(\hat{\Ybf}_{1j}(l_j|1),\Xbf_{1j}(1),\Ybf_{1j})\notin\mathcal{T}_{\epsilon_1}^{(n)}\mbox{ for all }l_j \mbox{ for some }j\in[1:J]\}$.
\item $\Ec_2 :=\{(\Ubf_j(1),\Ybf_{1j})\not\in\mathcal{T}_{\epsilon_2}^{(n)}$ for some $j\in[1:J]\}$.
\item $\Ec_3 := \{(\Ubf_j(m_{1j}),\Ybf_{1j})\in\mathcal{T}_{\epsilon_2}^{(n)}$ for some $m_{1j}\ne1$ and for some $j\in[1:J]\}$.
\item $\Ec_4 := \{(\Vbf_{j}(1),\Xbf_{1j}(1),\hat{\Ybf}_{1j}(1|1),\Ybf_{2j})\notin\mathcal{T}_{\epsilon_3}^{(n)}\mbox{ for some }j\in[1:J]\}$.
\item $\Ec_5 := \{(\Vbf_{j}(m_2),\Xbf_{1j}(l_{j-1}),\hat{\Ybf}_{1j}(l_j|l_{j-1}),\Ybf_{2j})\in\mathcal{T}_{\epsilon_3}^{(n)}\mbox{ for all $j$ for some } (l_1,\cdots,l_J),~m_2\ne 1\}$,
\end{itemize}
where the notations for typical sets are simplified. By the union bound, the error probability is bound as follows:
\begin{equation}
P(\Ec)\le P(\Ec_1)+P(\Ec_2\cap\Ec_1^c)+P(\Ec_3\cap\Ec_1^c)+P(\Ec_4\cap\Ec_1^c)+P(\Ec_5).
\end{equation}
The first term $P(\Ec_1)$ tends to zero as $n\rightarrow\infty$ by the covering lemma \cite{Gamal&Kim:11NIT} if $\hat{R}_2>I(\hat{Y}_1;Y_1|X_1)+\delta_1(\epsilon_1)$. The second term $P(\Ec_2)$ tends to zero as $n\rightarrow\infty$  because $\Ubf_j(1)\rightarrow\Ybf_{1j}$. The third term $P(\Ec_3)$ tends to zero as $n\rightarrow\infty$ by the packing lemma \cite{Gamal&Kim:11NIT} if
\begin{equation} \label{eq:theorem3R1}
R_1<I(U;Y_1).
\end{equation}
 The fourth term $P(\Ec_4\cap\Ec_1^c)$ tends to zero as $n\rightarrow\infty$ by the Markov lemma \cite{Gamal&Kim:11NIT}, since $(\Vbf_{j}(1),\Xbf_{1j}(1),$ $\hat{\Ybf}_{1j}(1|1))\in\mathcal{T}_{\epsilon_1}^{(n)}$ and
\begin{equation}
\hat{\Ybf}_{1j}\rightarrow(\Vbf_{j},\Xbf_{1j})\rightarrow\Ybf_{2j}.
\end{equation}
Finally, for the fifth term (The proof written in here is similar to that of \cite{Lim&Kim&Gamal&Chung:11IT}), define the events
\begin{equation}
\tilde{\Ec}_j(m,l_{j-1},l_j)=\{(\Vbf_{j}(m),\Xbf_{1j}(l_{j-1}),\hat{\Ybf}_{1j}(l_j|l_{j-1}), \Ybf_{2j})\in\mathcal{T}_{\epsilon}^{(n)}\}.\label{eq:event}
\end{equation}
Then, we can see that
\begin{align}
P(\Ec_5) &= P(\cup_{m\ne 1}\cup_{l_1,\cdots,l_J}\cap_{j=1}^J \tilde{\Ec}_j(m,l_{j-1},l_j))\\
         &\le \sum_{m\ne 1}\sum_{l_1,\cdots,l_J}P(\cap_{j=1}^J \tilde{\Ec}_j(m,l_{j-1},l_j))\\
         &= \sum_{m\ne 1}\sum_{l_1,\cdots,l_J}\prod_{j=1}^J P(\tilde{\Ec}_j(m,l_{j-1},l_j))\\
         &\le \sum_{m\ne 1}\sum_{l_1,\cdots,l_J}\prod_{j=2}^J P(\tilde{\Ec}_j(m,l_{j-1},l_j)).  \label{eq:append42}
\end{align}
Now, consider the probability of the event \eqref{eq:event}. First, assume that $l_{j-1}=1$. Then, by the joint typicality lemma \cite{Gamal&Kim:11NIT} we have for $l_{j-1}=1$,
\begin{align}
P(\tilde{\Ec}_j(m,l_{j-1},l_j)) &=P\{(\Vbf_{j}(m_2),\Xbf_{1j}(l_{j-1}),\hat{\Ybf}_{1j}(l_j|l_{j-1}),\Ybf_{2j})\in\mathcal{T}_{\epsilon_3}^{(n)}\}\\
                                &\le 2^{-n(I_1-\delta_3(\epsilon_3))}, \label{eq:l=1}
\end{align}
where $I_1 = I(V;\hat{Y}_1,Y_2|X_1)$, since $\Vbf_{j}(m_2)$
is independent of $\hat{\Ybf}_{1j}(l_j|l_{j-1})$ and $\Ybf_{2j}$ for given $\Xbf_{1j}(l_{j-1})$ due to $M_2=1\ne m_2$. Second, assume that $l_{j-1}\ne 1$. Then,  $(\Vbf_{j}(m_2),\Xbf_{1j}(l_{j-1}),\hat{\Ybf}_{1j}(l_j|l_{j-1}))$ is independent of $\Ybf_{2j}$.  Then, by \cite[Lemma 2]{Lim&Kim&Gamal&Chung:11IT}, which is an application of the joint typicality lemma, we have for $l_{j-1}\ne 1$,
\begin{equation}
P(\tilde{\Ec}_j(m,l_{j-1},l_j))\le 2^{-n(I_2-\delta_3(\epsilon_3))}, \label{eq:lne1}
\end{equation}
where $I_2=I(V,X_1;Y_2)+I(\hat{Y}_1;V,Y_2|X_1)$. If $l_1,l_2,\cdots,l_{J-1}$ have  $k$ $1$'s, then by \eqref{eq:l=1} and \eqref{eq:lne1} we have
\begin{equation}
\prod_{j=2}^n P(\tilde{\Ec}_j(m,l_{j-1},l_j))\le 2^{-n(kI_1+(J-1-k)I_2-(J-1)\delta_3(\epsilon_3))}.
\end{equation}
Therefore, from \eqref{eq:append42} we have
\begin{align}
P(\Ec_5) &\le \sum_{m\ne 1}\sum_{l_1,\cdots,l_J}\prod_{j=2}^J P(\tilde{\Ec}_j(m,l_{j-1},l_j))\\
         &\le \sum_{m\ne 1}\sum_{l_J}\sum_{l_1,\cdots,l_{J-1}}\prod_{j=2}^b P(\tilde{\Ec}_j(m,l_{j-1},l_j))\\
         &\le \sum_{m\ne 1}\sum_{l_J}\sum_{k=0}^{J-1}\left(\begin{array}{c}J-1\\k\end{array}\right) 2^{n(J-1-k)\hat{R}_2}\cdot 2^{-n(kI_1+(J-1-k)I_2-(J-1)\delta_3(\epsilon_3))}  \label{eq:append49}\\
         &= \sum_{m\ne 1}\sum_{l_J}\sum_{k=0}^{J-1}\left(\begin{array}{c}J-1\\k\end{array}\right) 2^{-n(kI_1+(J-1-k)(I_2-\hat{R}_2)-(J-1)\delta_3(\epsilon_3))}\\
         &\le 2^{nJR_2}\cdot 2^{n\hat{R}_2}\cdot 2^J\cdot 2^{-n((J-1)\min\{I_1,I_2-\hat{R}_2\}-(J-1)\delta_3(\epsilon_3))},
\end{align}
which tends to zero as $n\rightarrow\infty$, if
\begin{equation}
R_2<\frac{J-1}{J}(\min\{I_1,I_2-\hat{R}_2\}-\delta_3(\epsilon_3))-\frac{\hat{R}_2}{J}.
\end{equation}
(In \eqref{eq:append49}, the term
$2^{n(J-1-k)\hat{R}_2}$ accounts for the number of $l_{j-1}\ne 1$.
Eliminating $\hat{R}_2$ by substituting $I(\hat{Y}_1;Y_1|X_1)+\delta_1(\epsilon_1)$ from the condition $\hat{R}_2>I(\hat{Y}_1;Y_1|X_1)+\delta_1(\epsilon_1)$ and sending $J\rightarrow \infty$, we obtain
\begin{equation} \label{eq:theorem3R2}
R_2<\min\{I(V;\hat{Y}_1,Y_2|X_1), I(V,X_1;Y_2)-I(\hat{Y}_1;Y_1|V,X_1,Y_2)\}-\delta_1(\epsilon_1)-\delta_3(\epsilon_3).
\end{equation}
Since $\delta_1$ and $\delta_3$ converge to zero, we have the claim by \eqref{eq:theorem3R1} and \eqref{eq:theorem3R2}.
\hfill{$\blacksquare$}

\section{Achievable Rate Region for the RBC-CF/NNC scheme in the Gaussian Case}
\label{subsec:appendB}

In the Gaussian case, we have $p(u) \sim \Cc\Nc(0, \alpha P_0)$,  $p(v) \sim \Cc\Nc(0,\bar{\alpha}P_0)$, and $p(x_1) \sim \Cc\Nc(0, P_1)$. Furthermore, we have  \eqref{eq:GaussianSuperpose} and  \eqref{eq:GaussianCFhatY}  for $p(x_0|u,v)$ and $p(\hat{y}_1|y_1,x_1)$, respectively.
  We need to compute $R_1$ and $R_2$ in \eqref{eq:RBC_CF_Achievable2} and \eqref{eq:RBC_CF_Achievable} based on \eqref{eq:GaussianSuperpose},   \eqref{eq:GaussianCFhatY},
\eqref{eq:GaussianChannel1}, and \eqref{eq:GaussianChannel2}.
Since
\begin{align}
Y_1 &= h_{01}(U+V) + Z_1\\
\hat{Y}_1 &= h_{01}V + Z_1 + \hat{Z}\\
Y_2 &= h_{02}(U+V) + h_{12}X_1 + Z_2,
\end{align}
the achievable rate region in Theorem \ref{prop:rbccf} is given by
\begin{align}
R_1 <& I(U;Y_1)\nonumber\\
    =& I(U;h_{01}U + h_{01}V + Z_1) \label{eq:Gaussian_RBC_CF_R1}\\
R_2 <& \min\{I(V;\hat{Y}_1,Y_2|X_1),I(V,X_1;Y_2)-I(\hat{Y}_1;Y_1|V,X_1,Y_2)\}\nonumber\\
    =& \min\{I(V;h_{01}V+Z_1+\hat{Z},h_{02}V + h_{02}U + Z_2),\nonumber\\
     & I(V,X_1;h_{02}V+h_{12}X_1 + h_{02}U+Z_2)-I(Z_1+\hat{Z};h_{01}U+Z_1|h_{02}U+Z_2)\}\label{eq:Gaussian_RBC_CF_R2}
\end{align}
Then, the term in \eqref{eq:Gaussian_RBC_CF_R1} and the first argument of the minimum in \eqref{eq:Gaussian_RBC_CF_R2} are respectively given by
\begin{align}
I(U;h_{01}U + h_{01}V + Z_1) &= \log\left(1+\frac{|h_{01}|^2\alpha P_0}{|h_{01}|^2\bar{\alpha}P_0 + N_1}\right)\\
I(V;h_{01}V+Z_1+\hat{Z},h_{02}V + h_{02}U + Z_2) &= \log\left(1+\frac{|h_{01}|^2\bar{\alpha}P_0}{N_1+\hat{N}}+\frac{|h_{02}|^2\bar{\alpha}P_0}{|h_{02}|^2\alpha P_0+N_2}\right)
\end{align}
The first term  of the second argument in the minimum in \eqref{eq:Gaussian_RBC_CF_R2} is expressed as
\begin{align}
I(V,X_1;h_{02}V+h_{12}X_1 + h_{02}U+Z_2) &= \log\left(1+\frac{|h_{02}|^2\bar{\alpha}P_0+|h_{12}|^2P_1}{|h_{02}|^2{\alpha}P_0+N_2}\right).
\end{align}
Finally, the second term of the second argument in the minimum in \eqref{eq:Gaussian_RBC_CF_R2} can be expressed as
\begin{align}
&I(Z_1+\hat{Z};h_{01}U+Z_1|h_{02}U+Z_2)\nonumber\\
&= h(Z_1+\hat{Z}|h_{02}U+Z_2)-h(Z_1+\hat{Z}|h_{01}U+Z_1,h_{02}U+Z_2) \nonumber\\
&= h(Z_1+\hat{Z})-h(Z_1+\hat{Z}|h_{01}U+Z_1,h_{02}U+Z_2)\nonumber\\
&= \log\left(N_1 + \hat{N}\right)-\log\left(\frac{N_2|h_{01}|^2\alpha P_0 + N_1|h_{02}|^2\alpha P_0 + N_1N_2}{N_1N_2|h_{01}|^2\alpha P_0 + \hat{N}N_2|h_{01}|^2\alpha P_0 + \hat{N}N_1|h_{02}|^2\alpha P_0 +\hat{N}N_1N_2}\right)\nonumber\\
&= \log\left(1+\frac{N_1^2N_2+N_1^2|h_{02}|^2\alpha P_0}{\hat{N}N_1N_2+\hat{N}N_2|h_{01}|^2\alpha P_0+\hat{N}N_1|h_{02}|^2\alpha P_0+N_1N_2|h_{01}|^2\alpha P_0}\right).
\end{align}


\end{document}